\title{A numerical closure approach for kinetic models of polymeric fluids: exploring closure relations for FENE dumbbells}
\author{Giovanni Samaey\thanks{Department of Computer Science, K.U. Leuven, Celestijnenlaan 200A, B-3001 Leuven, Belgium {\tt giovanni.samaey@cs.kuleuven.be}}, Tony Leli\`evre\thanks{CERMICS, Ecole des Ponts ParisTech, 6 et 8 avenue Blaise Pascal, Cité Descartes - Champs sur Marne, 77455 Marne la Vallée Cedex 2, France {\tt lelievre@cermics.enpc.fr}} and Vincent Legat\thanks{Department of Mechanical Engineering and iMMC, U.C. Louvain, Avenue Georges Lemaître, 4, B-1348 Louvain-la-Neuve, Belgium \tt{vincent.legat@uclouvain.be}}}
\date{\today}
\DeclareMathOperator*{\argmin}{{\rm argmin}}
\begin{document}

\ifpdf
\DeclareGraphicsExtensions{.pdf, .jpg, .tif}
\else
\DeclareGraphicsExtensions{.eps, .jpg}
\fi

\newtheorem{thm}{Theorem}[section]
\newtheorem{cor}[thm]{Corollary}
\newtheorem{lem}[thm]{Lemma}
\newtheorem{pro}[thm]{Proposition}
\newtheorem{alg}[thm]{Algorithm}
\newtheorem{cond}[thm]{Condition}
\theoremstyle{definition}
\newtheorem{model}[thm]{Model problem}
\newtheorem{defn}[thm]{Definition}
\newtheorem{rem}[thm]{Remark}
\newtheorem{ex}[thm]{Example}

\numberwithin{equation}{section}

\newcommand{\We}{\textrm{We}}
\newcommand{\Id}{\textrm{Id}}
\renewcommand{\d}{\delta}
\newcommand{\De}{\textrm{De}}
\renewcommand{\Re}{\textrm{Re}}
\newcommand{\X}{\mathbf{X}}
\renewcommand{\u}{\mathbf{u}}
\newcommand{\F}{\mathbf{F}}
\newcommand{\x}{\mathbf{x}}
\newcommand{\W}{\mathbf{W}}
\newcommand{\M}{\mathbf{M}}
\newcommand{\R}{\mathcal{R}}
\newcommand{\system}[1]{\left\{ \begin{alignedat}{2}#1\end{alignedat}\right.}
\newcommand{\comment}[1]{ {\large *** {\bf #1}***}}

\newcommand{\RR}{\mathbb{R}}

\maketitle

\begin{abstract}
We propose a numerical procedure to study closure approximations for FENE dumbbells in terms of chosen macroscopic state variables, enabling to test straightforwardly which macroscopic state variables should be included to build good closures. The method involves the reconstruction of a polymer distribution related to the conditional equilibrium of a microscopic Monte Carlo simulation, conditioned upon the desired macroscopic state.  We describe the procedure in detail, give numerical results for several strategies to define the set of macroscopic state variables, and show that the resulting closures are related to those obtained by a so-called quasi-equilibrium approximation~\cite{Ilg:2002p10825}.
\end{abstract}

\section{Introduction}

The simulation of dilute solutions of polymers in a Newtonian solvent is a challenging modelling and numerical problem, since deformation of the polymer molecules causes stresses that result in macroscopic non-Newtonian rheological behavior. One approach is to couple the macroscopic fluid flow equations to a microscopic model for the polymers, a so-called micro-macro model~\cite{Hulsen:1997p7027,Laso:1993p10000,le-bris-lelievre-09}.
The simplest microscopic models, that we will use in this paper, describe the individual polymers as non-interacting dumbbells, consisting of two beads connected by a spring that models intramolecular interaction. The state of the polymer chain is described by the end-to-end vector $\X_t$ that connects both beads whose evolution is modelled using a stochastic differential equation (SDE):
\begin{equation}\label{eq:intro-micro}
	d\X_t+\u \cdot \nabla_x \X_t \, dt=\left[\nabla_x \u \X_t-\dfrac{2}{\zeta}\F(\X_t)\right]dt + \sqrt{\dfrac{4 k_B T}{\zeta}}d\W_t,
\end{equation}
where $\u$ is the velocity field of the solvent, $\zeta$ is a friction coefficient, $T$ is the temperature, $k_B$ is the Boltzmann constant, and $\W_t$ is a standard multidimensional Brownian motion. This model takes into account Stokes drag (due to the solvent velocity field), a spring force $\F$ and Brownian motion (due to collisions with solvent molecules). The left-hand side of Equation \eqref{eq:intro-micro} is the convective derivative. Note that the stochastic process $\X_t$ implicitly depends on the space variable $x$.

To specify the microscopic model \eqref{eq:intro-micro} completely, we need to define the spring force. This force can be more or less complicated, depending on the effects taken into account. The simplest model is the Hookean dumbbell model for which the spring is linear elastic:
\[
\F(\X)=H\X,
\]
with $H$ a spring constant. Another model, which is the focus of this paper and which is known to yield better agreement with experiments, is the finitely extensible nonlinear elastic (FENE) force~\cite{Bird:1987p11088}:
\begin{equation}\label{e:fene}
\F(\X)=\dfrac{H\X}{1-\|\X\|^2/(bk_BT/H)},
\end{equation}
where $b$ is a nondimensional parameter related to the maximal polymer length.

In the macroscopic part of the model, the evolution of the solvent velocity and pressure fields $\u$ and $p$ is modeled by mass and momentum conservation equations:
\begin{equation}\label{eq:intro-macro}
\left\{
\begin{aligned}
&	\rho \left(\dfrac{\partial \u}{\partial t}+\u\cdot\nabla_x  \u\right)=\eta_s\Delta_x \u - \nabla_x p + \textrm{div}_x(\tau_p),\\
&	\textrm{div}_x(\u)=0,
\end{aligned}
\right.
\end{equation}
with $\rho$ the density and $\eta_s$ the viscosity.
Equation \eqref{eq:intro-macro} contains an additional stress tensor $\tau_p$ due to polymer deformation, which is given via the classical Kramers' expression
\begin{equation}\label{eq:intro-stress}
\tau_p(x,t) = n\langle \X_t\otimes \F(\X_t)\rangle-nk_BT \, \Id.
\end{equation}
Here, $n$ is the polymer concentration and $\langle \cdot \rangle$ denotes the expectation over configuration space, which is approximated in practice by an empirical mean over a very large ensemble of realizations of $\X_t$, solutions to \eqref{eq:intro-micro}.

One thus obtains a coupled system \eqref{eq:intro-micro}--\eqref{eq:intro-macro}--\eqref{eq:intro-stress} that we rewrite in a non-dimensional form as (see for example \cite{Jourdain:2004p7045}):
\begin{align}
&	\Re \left(\dfrac{\partial \u}{\partial t}+\u\cdot\nabla_x  \u\right)=(1-\epsilon)\Delta_x \u - \nabla_x p + \textrm{div}_x(\tau_p), \label{eq:NS1}\\
&	\textrm{div}(\u)=0, \label{eq:NS2}\\
&\tau_p = \dfrac{\epsilon}{\We} \Big( \left\langle \X_t\otimes \F(\X_t)\right\rangle-\Id \Big), \label{eq:mic1} \\
&	d\X_t+\u \cdot \nabla_x \X_t \, dt=\left[\nabla_x \u \, \X_t-\dfrac{1}{2\We}\F(\X_t)\right]dt + \dfrac{1}{\sqrt{\We}}d\W_t, \label{eq:mic2}
\end{align}
where the nondimensional parameters are:
\begin{equation}
	\Re = \dfrac{\rho U L}{\eta},\qquad \We = \dfrac{\lambda U}{L}, \qquad \epsilon =\dfrac{\eta_p}{\eta}.
\end{equation}
Here, $U$ is a characteristic velocity, $L=\sqrt{k_B T/H}$ denotes a characteristic length,  $\lambda=\zeta/4H$ is a characteristic relaxation time for the polymers and $\eta_p=nk_BT\lambda$ is a viscosity associated to the polymers. The total viscosity is $\eta = \eta_p+\eta_s$.  
The parameters $\Re$ and $\We$ are the Reynolds and Weissenberg number, respectively. The nondimensional Hookean and FENE forces write respectively:
\begin{equation}\label{eq:springs}
	\F_{HOOK}(\X)=\X, \qquad \F_{FENE}(\X)=\dfrac{\X}{1-\|\X\|^2/b}.
\end{equation}

The microscopic part of the model, {\em i.e.}~\eqref{eq:mic1}--\eqref{eq:mic2}, can equivalently be described by a diffusion equation that governs the evolution of the probability distribution $\varphi(\X,x,t)$ of the random variable $\X_t$ (considered at point $x$ in physical space):
\begin{equation}\label{eq:intro-fp}
	\frac{\partial\varphi}{\partial t}+\u \cdot \nabla_x \varphi = \frac{1}{2 \We} \Delta_\X \varphi - {\rm div}_\X \left(\nabla_x \u \, \X \, \varphi\right)+\frac{1}{2 \We}{\rm div}_\X \left(\F(\X)\varphi\right),
\end{equation} 
The expectation in~\eqref{eq:mic1} then becomes an average with respect to the probability measure $\varphi(\X,x,t) \, d \X$: \begin{equation}\label{eq:mic1fp}
\tau_p(x,t)=\frac{\epsilon}{\We} \left( \int \X\otimes F(\X) \, \varphi(\X,x,t)d\X -  \Id \right). 
\end{equation}
We refer for example to \cite{Bird:1987p11088,Doi:1988p9812,Ott96} for more details on the physical background and more complicated models. 

A numerical simulation of the coupled system \eqref{eq:NS1}--\eqref{eq:mic2} is very expensive, since one needs to obtain the non-Newtonian stress tensor $\tau_p$ at each space-time discretization node.  Several approaches have been proposed in the literature \cite{Keunings:2004p49,le-bris-lelievre-09}.
A first approach is a deterministic micro-macro simulation. Here, one couples the Fokker--Planck equation \eqref{eq:intro-fp}--\eqref{eq:mic1fp} with the Navier--Stokes equations~\eqref{eq:NS1}--\eqref{eq:NS2}. The main drawback of these methods is their high computational cost, due to the high-dimensionality of the function $\varphi$ (which depends on seven scalar variables $(\X,x,t)$ in dimension 3). This difficulty becomes all the more severe when more refined models involving higher dimensional microscopic variables $\X_t$ are used to describe the polymers. Specialized techniques are currently being developed; see~{\em e.g.}~\cite{Ammar:2006p10934,Ammar:2007p10932,delaunay-lozinski-owens-07}.
The micro-macro simulation can also be performed stochastically. One then  discretizes the macroscopic fields (velocity, pressure, stress) on a mesh, and supplements the (macroscopic) discretization of the Navier-Stokes equations with a stochastic simulation of an ensemble of polymers using a discretization of the SDE~\eqref{eq:mic2}, see~\cite{Hulsen:1997p7027,Laso:1993p10000}. Methods have been proposed to obtain sufficiently low-variance results \cite{Bonvin:1999p4981,Hulsen:1997p7027,Jourdain:2004p7045}.

Due to the very high computational cost of micro-macro simulations, another route which has been followed (see~{\em e.g.}~\cite{Herrchen:1997p8915,Hyon:2008p9897,Keunings:1997p9982,Lielens:1998p6790,Lielens:1999p9945,Sizaire:1999p9912}) is to look for an approximate closure at the macroscopic level, namely a model of the form:
\begin{align}
	&\dfrac{\partial \M}{\partial t}+\u\cdot\nabla_x \M=\mathcal{H}(\M,\nabla_x \u),\label{eq:closure_macro}
	\\
&\tau_p=T(\M), \label{eq:constitutive}
\end{align}
which is close to the microscopic model~\eqref{eq:mic1}--\eqref{eq:mic2}. Here $\M$ denotes an ensemble of macroscopic state variables that depend on time and space. A basic example of such a macroscopic model is the Oldroyd-B model~\cite{Bird:1987p11088}, which is actually equivalent to the microscopic model~\eqref{eq:mic1}--\eqref{eq:mic2} for a linear force $\F(\X)=\F_{HOOK}= \X$.
In this case, one can obtain a closed equation on the so-called conformation tensor ${\boldsymbol \sigma}(t)=(\sigma_{i,j}(t))_{i,j=1}^d$, with $d$ the number of space dimensions, and $ \sigma_{i,j}(t)= \langle (X_i)_t (X_j)_t\rangle$, in which $(X_{i})_t$, resp.~$(X_{j})_t$, represent the corresponding component of $\X_t$. This yields the equation :
$$
\partial_t \tau_p + \u \cdot \nabla_x \tau_p = \nabla_x \u \, \tau_p + \tau_p \,\nabla_x \u^T + \frac{\epsilon}{\We} ( \nabla_x \u + \nabla_x \u^T) - \frac{1}{\We} \tau_p.
$$On the other hand, for the FENE model, no equivalent closed macroscopic model is known, and one has to resort to approximate closures to obtain macroscopic equations (see Section~\ref{sec:closure}). The basic idea is to approximate the polymer distribution by a so-called \emph{canonical distribution function},
which is determined using only the macroscopic state variables $\M$ (typically low-order moments of the distribution). The microscopic evolution law~\eqref{eq:mic2} (or \eqref{eq:intro-fp}) is then replaced by a set of equations \eqref{eq:closure_macro} for the evolution of the macroscopic state variables $\M$, combined with a constitutive equation \eqref{eq:constitutive} for the stress.
While such approximate macroscopic models are desirable, at least from a computational point of view, it is however not always clear how to quantify the effects of the introduced approximations on the accuracy of the simulation, and how to choose the macroscopic state variables $\M$. 

Recently, there has been quite some interest in the development of computational methods that aim at accelerating micro-macro simulation using on-the-fly numerical closure approximations. We mention equation-free \cite{KevrGearHymKevrRunTheo03,Kevrekidis:2009p7484} and heterogeneous multiscale methods (HMM) \cite{EEng03,E:2007p3747}. 
In both approaches, a crucial step is to define an operator that generates a microscopic state corresponding to a given macroscopic state; this is actually equivalent to prescribing the closure approximation. This step is called \emph{lifting} in the equation-free framework, and \emph{reconstruction} in HMM. Inspired by these methods, the present paper studies in detail the question of lifting/reconstruction for the particular problem of micro-macro models for polymeric fluids; the procedure we propose, however, could be applied to many multiscale models. Specifically, we propose a computational procedure to reconstruct an ensemble of $N$ polymers consistently with a given macroscopic state $\M$, and we examine the errors that are introduced in the macroscopic evolution by {\em numerically enforcing closure} upon the selected macroscopic state variables. For convenience of exposition and illustration, we restrict ourselves to one-dimensional simulations with pre-imposed (time-dependent) velocity fields,  {\em i.e.}~equations \eqref{eq:mic1}--\eqref{eq:mic2} with given $\u(x,t)$, at one specific point $x$ in space. However, we emphasize that the numerical method can be used likewise for 2D or 3D situations, as well as for the closure approximation for the coupled problem~\eqref{eq:NS1}--\eqref{eq:mic2}.

The main contributions of the present paper are twofold:
\begin{itemize}
\item From a modelling viewpoint, we propose a numerical closure strategy that enables to easily explore which sets of macroscopic state variables should be chosen to get good closure approximations. Various strategies are proposed and tested.
\item From a theoretical viewpoint, we show the relation between this numerical closure strategy and the so-called quasi-equilibrium method proposed in~\cite{Ilg:2002p10825}, which relies on an entropy minimization principle.
\end{itemize}

The paper is organized as follows. In Section~\ref{sec:model}, we give some more detail on the FENE model and the existing literature on closure approximations. In Section~\ref{sec:ef}, we propose a numerical closure approximation based on constrained SDE simulations~\cite{Lelievre:2008p9419}, which is very flexible, and enables to explore the error introduced by the closure for various sets of macroscopic state variables $\M$. This numerical closure approximation is shown to be optimal in the sense that, when applied to a microscopic model which has an equivalent macroscopic model, it indeed yields the macroscopic model (Section~\ref{sec:fene-p}). Moreover, we show in Section~\ref{sec:eq_ott} that, in some specific cases, it is closely related to the closure approximation based on a quasi-equilibrium condition introduced in~\cite{Ilg:2002p10825}. Finally, we test the numerical closure using a number of different strategies to define the macroscopic state variables $\M$ (Section~\ref{sec:num-fene}). We first perform numerical experiments to assess the capability of the selected macroscopic state variables to \emph{recover the desired polymer distributions} in strong flow regimes. Second, we study if the procedure is able to \emph{correctly capture macroscopic evolution}. While accelerating microscopic simulation is not the primary purpose of the present paper, we give some remarks in this respect in Section~\ref{sec:concl}, where we briefly discuss the main results and give some directions for future research. 

\section{The FENE model and closure approximations}\label{sec:model}

\subsection{FENE dumbbells: discretization and a one-dimensional version}

As mentioned above, we consider polymer simulations with FENE dumbbells subject to a pre-imposed (time-dependent) velocity field. Thus, in the remainder of the paper, unless explicitly stated otherwise, the force is the FENE force, see~\eqref{eq:springs}~:
$$\F=\F_{FENE}.$$
Using the characteristic method to integrate the convective derivative in~\eqref{eq:mic2} (Lagrangian frame), the equations of interest reduce to:
\begin{equation}\label{eq:mic2_lag}
\begin{aligned}  
&\tau_p = \dfrac{\epsilon}{\We} \Big( \left\langle \X_t\otimes \F(\X_t)\right\rangle-\Id \Big),  \\
&	d\X_t =\left[ {\boldsymbol \kappa}(t) \X_t-\dfrac{1}{2\We}\F(\X_t)\right]dt + \dfrac{1}{\sqrt{\We}}d\W_t,
\end{aligned}
\end{equation}
where $\X_t$ now depends on the foot of the characteristic rather than on the Eulerian space position $x$, and ${\boldsymbol \kappa}$ is the velocity gradient (along the trajectory). Unless stated otherwise, we will work with a one-dimensional version of this equation,
\begin{equation}\label{eq:1D-preimposed}
\left\{
\begin{aligned}
& \tau_p= \dfrac{\epsilon}{\We} \Big( \left\langle X_t \, F(X_t)\right\rangle-1 \Big),\\
&dX_t=\left[\kappa(t)X_t - \dfrac{1}{2\We}F(X_t)\right]dt + \dfrac{1}{\sqrt{\We}}dW_t,
\end{aligned}
\right.
\end{equation}
keeping in mind that the algorithm, as well as its analysis and implementation extend straightforwardly to higher dimensions.
Note that $\kappa(t)$ is here a given one-dimensional time-dependent function and $F$ denotes a one-dimensional version of the FENE force, see~\eqref{eq:springs}, namely $$F(X)=\dfrac{X}{1-X^2/b}.$$
Such a one-dimensional framework has also been used in~\cite{Keunings:1997p9982} for example, to assess the influence of the Peterlin approximation (see Section~\ref{sec:closure}) on transient behaviour.

Concerning discretization methods, we use a classical Euler-Maruyama scheme~\cite{KloPla92} with a Monte Carlo method:
\begin{equation}\label{eq:model-discr}
\left\{
\begin{aligned}
&\tau_{p}^k=\dfrac{\epsilon}{\We} \left(  \dfrac{1}{N}\sum_{n=1}^N \left(X^{n,k} \, F(X^{n,k})\right) -1 \right) ,\\
&	X^{n,k+1}=X^{n,k}+\left[ \kappa(t^k) X^{n,k} - \dfrac{1}{2\We}F(X^{n,k})\right]\d t + \dfrac{1}{\sqrt{\We}}\sqrt{\d t}\; \xi^{n,k},
\end{aligned}
\right.
\end{equation}
where  the indices $n$ and $k$ denote respectively realization index and time index, $t^k= k \d t$ and $\xi^{n,k}$ are i.i.d. normal random variables.

For convenience, we introduce a short-hand notation for the discretization scheme of the SDE in~\eqref{eq:model-discr},
\begin{equation}
	\mathcal{X}^{k+1}=s_{\X}(\mathcal{X}^k,\kappa(t^k),\d t),
\end{equation}
where $\mathcal{X}=\{X^n\}_{n=1}^N$ is the ensemble of $N$ realizations, and $\kappa(t^k)$ indicates explicitly the value of the velocity gradient in~\eqref{eq:model-discr} that is considered over the time interval of size~$\d t$.

Theoretically, it can be shown that (for sufficiently large $b$), the norm of the end-to-end vector in~\eqref{eq:mic2} or~\eqref{eq:1D-preimposed} (recall that $F=F_{FENE}$) cannot exceed the maximal value $\sqrt{b}$~\cite{jourdain-lelievre-02}. However, the discretization scheme~\eqref{eq:model-discr} might yield spring lengths beyond this maximal value. There are two ways to deal with this problem~\cite[Section 4.3.2]{Ott96}.  The first is via an accept-reject method, in which, for each polymer, the state after each time step is rejected if $|\X|^2>(1-\sqrt{\d t})b$, and a new random number is tried until acceptance.  Alternatively, one could use a semi-implicit predictor-corrector method. In this text, we choose the accept-reject strategy.

\subsection{Closure approximations for FENE dumbbells}\label{sec:closure}

We now briefly discuss the derivation of closure approximations of the type~\eqref{eq:closure_macro}-\eqref{eq:constitutive} for the FENE model.

One closure approximation is the Peterlin pre-averaging \cite{BIRD:1980p11107}.  Here, one constructs an approximation for the FENE model by defining the spring force as (compare with~\eqref{eq:springs})
\begin{equation}\label{e:fenep}
F_{FENE-P}(X)=\dfrac{X}{1-\langle X^2\rangle/b}.
\end{equation}
As a consequence, only the mean square length of the ensemble of polymers is constrained to remain smaller than $\sqrt{b}$, whereas the length of individual polymers may exceed this value. The interest of FENE-P dumbbells is that, as for Hookean dumbbells, a closed equation can be derived on the conformation tensor $\sigma = \langle X^2_t\rangle$, and thus a macroscopic model is obtained:
\begin{equation}\label{eq:FENE-P}
\begin{aligned}
&\partial_t \sigma + u  \nabla_x \sigma = 
2 \sigma  \nabla_x u - \frac{1}{\We} \frac{ \sigma }{1 -
    {\rm tr}(\sigma)/b} + \frac{1}{\We} .\\
&\tau_p=\frac{\epsilon}{\We} \left( \frac{ \sigma }{1 -
    {\rm tr}(\sigma)/b} - 1 \right),
\end{aligned}
\end{equation}
It has been shown in~\cite{Herrchen:1997p8915,Keunings:1997p9982} that the Peterlin approximation has a profound impact on transient behaviour in complex flows, compared to the original FENE model.

Let us now discuss more generally closure approximations of the type~\eqref{eq:closure_macro}. For the sake of clarity, and without loss of generality, we restrict ourselves to the one-dimensional case~\eqref{eq:1D-preimposed}.

Consider starting from a number $L$ of macroscopic state variables, $\M=\left\{ M_l\right\}_{l=1}^L$, which are defined as configuration space averages of functions $m_l$ of the configuration $X_t$,
\begin{equation}\label{eq:state-vars}
M_l(t) = \left\langle m_l(X_t)\right\rangle.
\end{equation} 
The goal is to obtain a closed system of $L$ evolution equations~\eqref{eq:closure_macro} for the state variables~$\M$, complemented with a constitutive equation \eqref{eq:constitutive} for $\tau_p$ as a function of these macroscopic state variables. 	

Using It\^o calculus, one can easily obtain the following equation of state for the macroscopic state variables,
\begin{equation}\label{eq:eq-of-state}
	\frac{d M_l}{d t} = \kappa(t) \underbrace{\left\langle X_t \, \frac{d m_l }{d X}(X_t)\right\rangle}_{M_l^D}-\frac{1}{2\We }\underbrace{\left\langle F(X_t) \,\frac{d m_l}{d X}(X_t)\right\rangle}_{M_l^C}+\frac{1}{2 \We}\underbrace{\left\langle\frac{d^2 m_l}{d X^2}(X_t)\right\rangle}_{M_l^B},
\end{equation}
in which the macroscopic state variables $M_l^{\{D,C,B\}}$ account for hydrodynamic drag, connector force and Brownian motion, respectively. Of course, in general, many of these macroscopic state variables $M_l^{\{D,C,B\}}$ are not functions of the initially chosen macroscopic state variables $\left\{ M_l\right\}_{l=1}^L$. One can  write evolution equations for these new state variables, which in turn will create additional state variables but this procedure typically goes on endlessly. At some point, one has to stop, and try to approximate the state variables for which no evolution equation is available by writing them as a function of other (already available) state variables. By adding such \emph{closure relations}, one obtains an explicit, but approximate, closed system of evolution equations.

Any closed macroscopic model needs to (i)  define the set of macroscopic state variables $\M=\left\{ M_l\right\}_{l=1}^L$, and (ii) provide a way of evaluating the remaining state variables $M_l^{\{D,C,B\}}$ in the evolution equation as a function of $\M$. In the literature, item (i) is generally addressed by considering a hierarchy of even moments, {\em i.e.}\ $M_l=\langle X^{2l}\rangle$ where $l=1,\ldots,L$ (all the odd moments are zero for reasons of symmetry). Note that these become tensors in higher space dimensions. The corresponding evolution equations~\eqref{eq:eq-of-state} are then given as:
\begin{equation}\label{eq:eos-even}
		\frac{d M_l}{d t}= 2l \, \kappa(t) \, M_l  -\frac{1}{2 \We}M_l^C + \frac{l(2l-1)}{\We} M_{l-1}, 
\end{equation}
with $M_0=1$. In order to complete (ii), one needs to provide approximations for the  new additional macroscopic state variables $\left\{M_l^C\right\}_{l=1}^L$. Note that, in particular, one of this new additional macroscopic variable $M_1^C$ is also required to obtain the constitutive relation~\eqref{eq:constitutive} for $\tau_p$. One strategy to approximate $\left\{M_l^C\right\}_{l=1}^L$ is to propose a probability distribution $\varphi_{\M}(X)$ (called a \emph{canonical distribution function}) that is parameterized by the selected macroscopic state variables, and to compute $\left\{M_l^C\right\}_{l=1}^L$ in the evolution equations \eqref{eq:eq-of-state} as the expectation with respect to this canonical distribution function. Note that $\varphi_{\M}(X)$ depends on time only through the dependency of $\M$ on the time variable. The rationale behind this approach is that the better one can approximate the microscopic distribution function, the more reliable the obtained macroscopic model should be. 

In \cite{Lielens:1998p6790,Lielens:1999p9945}, approximate closures for $M_l^C$ are obtained by restricting the space of admissible distribution functions to linear combinations of $L$ \emph{canonical basis functions}. 
Based on this approach, several closures have been proposed; see \cite{Lielens:1998p6790} for more details on the one-dimensional setting \eqref{eq:1D-preimposed} and \cite{Lielens:1999p9945} for the general three-dimensional case. A related approach is described in \cite{Du:2006p11012,Hyon:2008p9897,Yu:2005p11011}. Another route is described in the following section.

\subsection{Quasi-equilibrium approximations}\label{sec:ott}

A particularly interesting approach is proposed in~\cite{Ilg:2002p10825}. It consists in defining a so-called quasi-equilibrium canonical distribution function $\varphi^{QE}_\M$ via a constrained entropy optimization problem:
\begin{equation}\label{eq:phiM_ott}
\varphi^{QE}_\M=\argmin_{\varphi \in \Omega_\M} \int \varphi \ln \left( \frac{\varphi}{\varphi_{eq}} \right),
\end{equation}
where $\Omega_\M$ is defined as the set of all probability density functions, for which the average of $m_l$ is indeed $M_l$:
\begin{equation}\label{eq:omega_M}
\Omega_\M= \left\{ \varphi(X), \, \varphi \ge 0 , \, \int \varphi(X) \, dX =1, \, \int m_l(X) \varphi(X) \, dX = M_l , \, l = 1, \ldots,L \right\}.
\end{equation}
In~\eqref{eq:phiM_ott}, $\varphi_{eq}$ is defined as the equilibrium distribution for the polymer configuration, for zero velocity field. In particular, for FENE dumbbells, it writes:
$$\varphi_{eq}(X)= Z^{-1} \left(1- X^2/b\right)^{b/2},$$
where $\displaystyle{Z=\int_{|X|\le \sqrt{b}} \left(1- X^2/b\right)^{b/2} dX}$.

The rationale behind this approximation is to assume a separation of time scales between the (supposedly fast) relaxation towards the quasi-equilibrium distribution and the (supposedly much slower) evolution of the macroscopic state variables.
 
An explicit expression of the solution to~\eqref{eq:phiM_ott} can be obtained as:
\begin{equation}\label{eq:QE}
\varphi^{QE}_\M(X) = Z_\M^{-1} \, \varphi_{eq}(X) \exp \left( \sum_{l=1}^L \lambda_l \, m_l(X) \right),
\end{equation}
where $\displaystyle{Z^{QE}_\M= \int \varphi_{eq}(X) \exp \left( \sum_{l=1}^L \lambda_l \, m_l(X) \right) \, dX}$ and the set of Lagrange multipliers $\Lambda=\{\lambda_l\}_{l=1}^L$ are determined by the constraints $\displaystyle \int m_l(X) \varphi^{QE}_\M(X) \, dX = M_l$. 

While the Lagrange multipliers depend only on the macroscopic state $\M$, the relation $\Lambda(\M)$ can often not be obtained analytically. Therefore, in \cite{Ilg:2002p10825}, a numerical procedure is proposed to simulate the resulting closed macroscopic model. We will show below (see Section~\ref{sec:eq_ott}) that the numerical closure approximation technique that we propose (see Section~\ref{sec:ef}) is closely related to this method, and that it may be considered (for a slightly modified version) as a different numerical strategy to obtain quasi-equilibrium closure approximations.

\section{Numerical method}\label{sec:ef}

In this section, we propose to mimic the evolution of the corresponding unavailable macroscopic model via a \emph{coarse time-stepper} \cite{KevrGearHymKevrRunTheo03,Kevrekidis:2009p7484}.

\subsection{The lifting and restriction operators}

Consider the evolution of an ensemble of polymers in a pre-imposed velocity field and define a set of macroscopic state variables $\M$ which are believed to represent the underlying (microscopic) polymer distribution sufficiently accurately.  
We introduce two operators that make the transition between
microscopic and macroscopic state variables.
We define a \emph{lifting operator},
\begin{equation}\label{eq:intro_lifting}
\mathcal{L}: \M \mapsto \mathcal{X},
\end{equation}
which maps a macroscopic state to an ensemble of $N$ polymer configurations, and the associated
\emph{restriction operator},
\begin{equation}\label{eq:intro_restriction}
\mathcal{R}: \mathcal{X} \mapsto \M,
\end{equation}
which maps an ensemble of configurations to the corresponding macroscopic state.
Note that we directly define the method at the discrete level over an ensemble of $N$ configurations (after Monte Carlo discretization). For a discussion in the limit of an infinitely large number of polymer configurations, we refer to Section~\ref{sec:lift_rest_limit}.

The restriction operator is readily defined using an empirical mean:
\begin{equation}\label{eq:R}
\mathcal{R}(\mathcal{X})=\{M_l=\mathcal{R}_l(\mathcal{X}) \}_{l=1}^L \text{ with } \mathcal{R}_l(\mathcal{X}) = \dfrac{1}{N}\sum_{n=1}^N m_l(X^n) \text{ for $l=1,\ldots,L$},
\end{equation}
where, we recall, $\mathcal{X}=\{X^n\}_{n=1}^N$ denotes the ensemble of configurations.

In the lifting step, we need to sample a reconstructed polymer distribution function, consistently with the given macroscopic state $\M(t^*)$ obtained at time $t^*$. To this end, we perform {\em a constrained simulation of an ensemble of polymers until equilibrium}, subject to the constraint that the macroscopic state remains constant and equal to $\M(t^*)$. More precisely, the constrained algorithm writes~\cite{Lelievre:2008p9419}:
\begin{equation} \label{eq:constrained}
\left\{
\begin{aligned}
&\mathcal{X}^{m+1}=s_\X(\mathcal{X}^m,\kappa(t^*),\d t)+\sum_{l=1}^L \lambda_l \nabla_{\mathcal{X}} \mathcal{R}_l (\mathcal{X}^m),\\
&\text{ with $\Lambda \in {\mathbb R}^L$ such that $\mathcal{R}_l (\mathcal{X}^{m+1}) = M_l(t^*)$ for $l=1,\ldots,L$.}
\end{aligned}
\right.
\end{equation}
It thus consists successively in an unconstrained Euler-Maruyama step, followed by a projection step to satisfy the constraint. In each constrained time step, the projection is done by solving the nonlinear system 
\begin{equation}\label{eq:lm-newton}
\mathcal{R}_l\left(\mathcal{X}^{m+1}(\Lambda;\mathcal{X}^{m},\delta t)\right)=M_l(t^*), \qquad \text{ for } l=1,\ldots,L,
\end{equation}
for the unknown Lagrange multipliers $\Lambda$ using Newton's method.  In \eqref{eq:lm-newton}, we have made explicit that the state $\mathcal{X}^{m+1}$ depends on the unknown Lagrange multipliers, as well as on (known) $\mathcal{X}^m$ and $\delta t$.
During the constrained simulation, an accept-reject strategy is applied on the combined evolution and projection operation, {\em i.e.}\ if, 
during projection, the state of a polymer would become unphysical, we reject the trial move in the unconstrained Euler-Maruyama step and repeat the time step for this polymer, after which the projection of the ensemble is tried again.

The lifting operator is then defined as the ensemble $\mathcal{X}^{m_\infty}$ for a sufficiently large time index $m_\infty$, which is chosen such that~\eqref{eq:constrained} has reached an equilibrium distribution, 
\begin{equation}\label{eq:L-discr}
{\mathcal L}(\M) = \mathcal{X}^{m_\infty}.
\end{equation}
We will detail further on how $m_\infty$ is determined numerically when describing the computational experiments. 
For a precise definition of the lifting operator in terms of distributions (in the limit of an infinite number of configurations), we refer to Section~\ref{sec:lift_rest_limit}.

Of course, by construction one has the consistency property
$$\mathcal{R} \circ \mathcal{L}= \Id.$$


\subsection{The numerical closure algorithm}\label{sec:algo}

Let us now make precise the complete algorithm. Given an initial condition for the macroscopic state variables $\M(t^*)$ at
time $t^*$, one time step of the coarse time-stepper consists of a three-step procedure:
\begin{itemize}
\item[(i)]	\emph{Lifting}, {\em i.e.}\ the
creation of initial conditions $$\mathcal{X}(t^*)=\mathcal{L}(\M(t^*))$$ for the microscopic model, consistently with the macroscopic state  $\M(t^*)$ at $t^*$.
\item[(ii)] \emph{Simulation} using the microscopic model over a time interval $[t^*,t^*+ K \delta t]$, where $K$ is the number of time steps, to get $\mathcal{X}(t^* + K \d t)$: for $k=0,\ldots,K-1$,
$$\mathcal{X}(t^*+ (k+1) \delta t)=s_\X \left(\mathcal{X}(t^*+k \delta t),\kappa(t^* + k \delta t), \delta t\right).$$
\item[(iii)] \emph{Restriction}, {\em i.e.}\ the observation (estimation) of the macroscopic state
at $t^*+K \delta t$:
$$\M(t^*+ K \delta t)=\mathcal{R}(\mathcal{X}(t^*+ K \delta t)).$$
\end{itemize}
In the following, we denote
$$\Delta t = K \delta t.$$

During the restriction step, the ensemble $\mathcal{X}(t^*+ \Delta t)$ is also used to get an estimate of the new value of the stress $$\tau_p(t^* + \Delta t)=  \frac{\epsilon}{\We} \left( \frac{1}{N} \sum_{n=1}^N X^n(t^*+ \Delta t) \, F(X^n(t^*+ \Delta t)) - 1 \right).$$

\subsection{The lifting and restriction operator in the continuous limit}\label{sec:lift_rest_limit}

The lifting and restriction operators which have been defined above depend on three discretization parameters: $N$ which is related to the Monte Carlo discretization (the operators have been defined for a finite ensemble of configurations), $\d t$ which is related to the time discretization in~\eqref{eq:constrained}, and $m_\infty$ which should be sufficiently large to reach a stationary state in~\eqref{eq:constrained}. In this section, we introduce the limiting operators $\overline{\mathcal L}$ and $\overline{\mathcal R}$ obtained in the limit $N \to \infty$, $\d t \to 0$ and $m_\infty \d t \to \infty$.

Note first that these operators are well-defined in terms of the probability distribution $\varphi$, rather than ensembles of configurations. More precisely,
the lifting operator $\overline{\mathcal L}$ consists in constructing a probability distribution $\varphi^{NC}_\M$ consistently with the macroscopic state variables $\M$ (using the notation of Sections~\ref{sec:closure}-\ref{sec:ott}),
\begin{equation}\label{eq:L-inf}
	\overline{\mathcal L}(\M)=\varphi_\M^{NC}(X),
\end{equation}
in which the superscript $NC$ stands for {\em numerical closure}.
Likewise, the restriction operator $\overline{\mathcal R}$ reduces a distribution to macroscopic state variables.

The restriction operator $\overline{\mathcal R}$ is simply an averaging operator, which computes the averages of $m_i$ with respect to the distribution $\varphi$ (compare with~\eqref{eq:R}):
\begin{equation}\label{eq:Rbar}
\overline{\mathcal{R}}(\varphi)=\{M_l=\overline{\mathcal{R}}_l(\varphi) \}_{l=1}^L \text{ with } \overline{\mathcal{R}}_l(\varphi) = \int m_l \varphi \text{ for $l=1,\ldots,L$},
\end{equation}
On the other hand, the lifting operator $\overline{\mathcal L}$ is more involved to define. When considering the continuous-in-time version of~\eqref{eq:constrained} in the limit of an infinite number of configurations, $N \to \infty$, it can be seen to be given by the one-dimensional marginal of the stationary state of the associated Fokker-Planck equation.

Let us make this statement precise.
For a fixed value $N$, the numerical scheme~\eqref{eq:model-discr} is a discretization of the following constrained Stratonovitch SDE on the ensemble ${\mathcal X}_t=\{X^n_t \}_{n=1}^N$ (see~\cite{Lelievre:2008p9419} and~\cite[Chapter 3]{lelievre-rousset-stoltz-10}):
\begin{equation}\label{eq:XNt}
d{\mathcal X}_t=P({\mathcal X}_t) \left[ \kappa(t^*) {\mathcal X}_t - \frac{1}{2 \We} F({\mathcal X}_t)  \right] \, dt + \frac{1}{\sqrt{\We}} P({\mathcal X}_t) \circ d \mathcal{W}_t,
\end{equation}
where, with a slight abuse of notation, $F({\mathcal X}_t)\equiv \left(F(X^n_t)\right)_{n=1}^N$, and $\mathcal{W}_t$ represents an $N$-dimensional Brownian motion. The projection operator $P({\mathcal X}_t)$ is defined by:
$$P({\mathcal X})= \Id - \sum_{i,j=1}^L G^{-1}_{i,j}({\mathcal X}) \nabla_{\mathcal X} {\mathcal R}_i({\mathcal X}) \otimes \nabla_{\mathcal X} {\mathcal R}_j({\mathcal X})$$
with $G^{-1}_{i,j}({\mathcal X})$ the inverse of the Gram matrix:
$$G_{i,j}({\mathcal X})=\nabla_{\mathcal X} {\mathcal R}_i({\mathcal X}) \cdot \nabla_{\mathcal X} {\mathcal R}_j({\mathcal X})$$
and $\circ$ denotes the Stratonovitch product. If we denote 
\begin{equation}\label{eq:sigma_M}
\Sigma(\M)=\{ {\mathcal X}, {\mathcal R}({\mathcal X}) = \M \}
\end{equation}
the submanifold of ${\mathcal X}$ at fixed values of the macroscopic state variables, then $P({\mathcal X})$ is the orthogonal projection operator onto the tangent space $T_{\mathcal X}\Sigma(\M)$ of $\Sigma(\M)$ at point $\mathcal X$. Thus, if ${\mathcal X}_0 \in \Sigma(\M)$, then, for all $t\ge 0$, ${\mathcal X}_t \in \Sigma(\M)$.

Let us denote $\psi^N(t,d{\mathcal X})$ the distribution of ${\mathcal X}_t$ satisfying~\eqref{eq:XNt}. Note that the components of ${\mathcal X}_t$ have all the same law, for symmetry reasons. Let us introduce the marginal of $\psi^N$ in the first variable:
\begin{equation}\label{eq:psi1N}
\psi^N_1(t,X^1)dX^1 = \int_{X^2,\ldots, X^N} \psi^N(t,dX^1,\ldots,dX^N).
\end{equation}
Then, $\varphi^{NC}_\M$ is defined as:
\begin{equation}\label{eq:psi_infty}
\varphi^{NC}_\M(X)=\lim_{N \to \infty} \lim_{t \to \infty} \psi^N_1(t,X).
\end{equation}
By a law of large numbers, it is expected that this distribution $\varphi^{NC}_\M$ is consistent with the fixed values of macroscopic state variables $\M$:
$$\varphi^{NC}_\M \in \Omega_{\M},$$
where $\Omega_{\M}$ is defined by~\eqref{eq:omega_M}.

We will discuss in Section~\ref{sec:eq_ott} how to get an analytical expression for $\varphi^{NC}_\M$, at least in some specific cases.

\subsection{Choice of the macroscopic state variables\label{sec:strat}}

For the FENE model, it appears that the first even moment $\langle X_t^2 \rangle$ is not sufficient to characterize the polymer distribution, and additional macroscopic state variables are needed. We will consider the macroscopic level to be determined by $L$ macroscopic state variables, $\M = \{M_l\}_{l=1}^{L}$, and we  consider the following strategies to select $M_l$, $l=1,\ldots, L$.

\paragraph{Strategy 1.} We consider a hierarchy of even moments of increasing order,
\begin{equation}
	M_l = \langle X_t^{2l}\rangle, \qquad l=1,\ldots, L.
\end{equation}
	
\paragraph{Strategy 2.} We consider a hierarchy of even moments of increasing order, and supplement the set of macroscopic state variables with the additional moments that appear in the corresponding evolution equations \eqref{eq:eos-even},
\begin{equation}
	\system{
	& 	M_l = \langle X_t^{2l}\rangle, &\\
	& M_{\tilde{L}/2+l}= M_l^C = 2l\langle F(X_t)\, X_t^{2l-1}\rangle, & \\
	}
\end{equation}
for $1\le l \le \tilde{L}/2$ where $\tilde{L}$ is assumed to be even. For FENE dumbbells, it can easily be checked that 
$$\tau_p=\dfrac{\epsilon}{\We} \left( M_1^C/2-1 \right),$$ and that all $M_l^C$, $l>1$ can be written as linear combinations of $M_l$, $l=1,\ldots, \tilde{L}/2$ and $\tau_p$.  Hence, this choice is equivalent to taking
\begin{equation}
	\system{
	& 	M_l = \langle X_t^{2l}\rangle, \qquad l=1,\ldots L-1&\\
	& M_{L}= \tau_p = \frac{\epsilon}{\We} \left( \langle X_t \, F(X_t)\rangle-1 \right) =\frac{\epsilon}{\We} \left( \left\langle\dfrac{X_t^2}{1- X_t^2/b}\right\rangle-1 \right), & \\
	}
\end{equation}
where $L=\tilde{L}/2+1$ denotes the number of linearly independent macroscopic state variables.

\paragraph{Strategy 3.} We again start from $M_1=\langle X_t^2 \rangle$. To add state variables, we write down the evolution equation for $M_1$, {\em i.e.}~\eqref{eq:eos-even} with $l=1$, and add all macroscopic state variables that appear in this equation. In this case, this amounts to adding the variable $M_2=M_1^C$.  We continue by writing down the evolution equation \eqref{eq:eq-of-state} for $M_2$, which, in turn, reveals additional state variables $M_2^{D,C,B}$. Some elementary algebra shows that we obtain four linearly independent macroscopic state variables:
\begin{equation}\label{eq:strat3-a}
M_1 = \langle X_t^2 \rangle, \qquad M_2 =\left\langle\dfrac{X_t^2}{1- X_t^2/b}\right\rangle-1,	
\end{equation} 
as above, and additionally
\begin{equation}\label{eq:strat3-b}
	M_3 = \left\langle\dfrac{X_t^2}{(1- X_t^2/b)^2}\right\rangle, \qquad
	M_4 = \left\langle\dfrac{X_t^4}{(1- X_t^2/b)^3}\right\rangle. 
\end{equation}
Note that these same macroscopic state variables would also show up after simplification by applying this procedure starting from the choice $M_1=\tau_p$.
If additional moments are desired, one could continue by writing down evolution equations for $M_3$ and $M_4$ and add the moments that appear in those equations, but we will not consider that in the remainder of the text.

\section{A consistency result for FENE-P dumbbells}\label{sec:fene-p}

To check the consistency of the whole procedure, let us apply the numerical closure approximation to the case of FENE-P dumbbells (namely using the spring force \eqref{e:fenep}). In this case, it is known that there exists a macroscopic equivalent model and the question is thus: do we recover this macroscopic model using the numerical closure procedure~? We first derive a theoretical result, which we subsequently illustrate numerically.

\subsection{A simple remark}\label{sec:math}

Let us consider the FENE-P model, with the above numerical closure approximation method applied using only one macroscopic state variable $M=\langle X_t^2 \rangle$. Note that the stress $\tau_p$ is defined in terms of $M$ as $$\tau_p=\frac{\epsilon}{\We} \left(\frac{M}{1-M/b} -1 \right).$$

As mentioned above (see~\eqref{eq:FENE-P}), for the microscopic model~\eqref{eq:1D-preimposed}, $M$ satisfies a closed equation:
\begin{equation}\label{eq:FENE-P_1d}
\partial_t M = 2 \kappa M - \frac{1}{\We} \frac{M}{1-M/b} + \frac{1}{\We}.
\end{equation}

We now make a simple observation to show that the numerical closure approximation (in the limit of zero discretization errors) reproduces this macroscopic dynamics. We refer to the notation of Section~\ref{sec:algo}. For a given value of $M(t^*)$ at time~$t^*$, the lifting step (i) creates an ensemble of configurations with, by construction, a law $\varphi^{NC}_{M(t^*)}=\overline{\mathcal L}(M(t^*))$ such that $\displaystyle \int X^2 \varphi^{NC}_{M(t^*)}(X) \, dX= M(t^*)$. But then, the simulation step (ii) will indeed propagate $M$ according to~\eqref{eq:FENE-P_1d} (which is deduced from~\eqref{eq:1D-preimposed} by a simple It\^o calculus). Thus, after the restriction step (iii), the correct values for $M$ are recovered.

In conclusion, if there exists a closed macroscopic equation for the stress, the proposed numerical closure approximation indeed recovers this macroscopic evolution as soon as the appropriate macroscopic state variables are selected.

\subsection{Numerical illustration}\label{sec:num-fenep}

We consider one-dimensional FENE-P dumbbells, governed by \eqref{eq:1D-preimposed}, in which the spring force $F(X)\equiv F_{FENE-P}(X)$ is given by \eqref{e:fenep} with nondimensional parameters $b=49$, $\We=1$ and $\epsilon=1$. As in \cite{Keunings:1997p9982}, we prescribe the velocity field
\begin{equation}\label{eq:complex_flow}
\kappa(t)=100\; t\; (1-t)\;\exp(-4t).
\end{equation}
 The microscopic model \eqref{eq:1D-preimposed} is discretized via the Euler-Maruyama method with time step $\d t=10^{-2}$. 

\subsubsection{Lifting}

To illustrate that the macroscopic variable $M=\langle X_t^2 \rangle$ uniquely determines the polymer distribution, we perform the following experiment.  We first simulate an ensemble of $N=10^5$ FENE-P dumbbells, subject to the velocity gradient $\kappa(t)$ over the time interval $t\in [0,0.3]$.  As the initial condition, we take the equilibrium polymer distribution in the absence of flow. At $t=0.3$, we obtain $M^*=M(t=0.3)$ via restriction; the corresponding polymer distribution is kept as the reference distribution. Next, we initialize a new ensemble of polymers consistently with the macroscopic state $M^*$ using a uniform distribution. We then perform a constrained simulation \eqref{eq:constrained} using the same time-step $\d t$ over the constrained time interval $[0,m_\infty\delta t]=[0,50]$.  The results are shown in Figure~\ref{fig:fenep_constrained}.  
\begin{figure}
\begin{center}
		\includegraphics{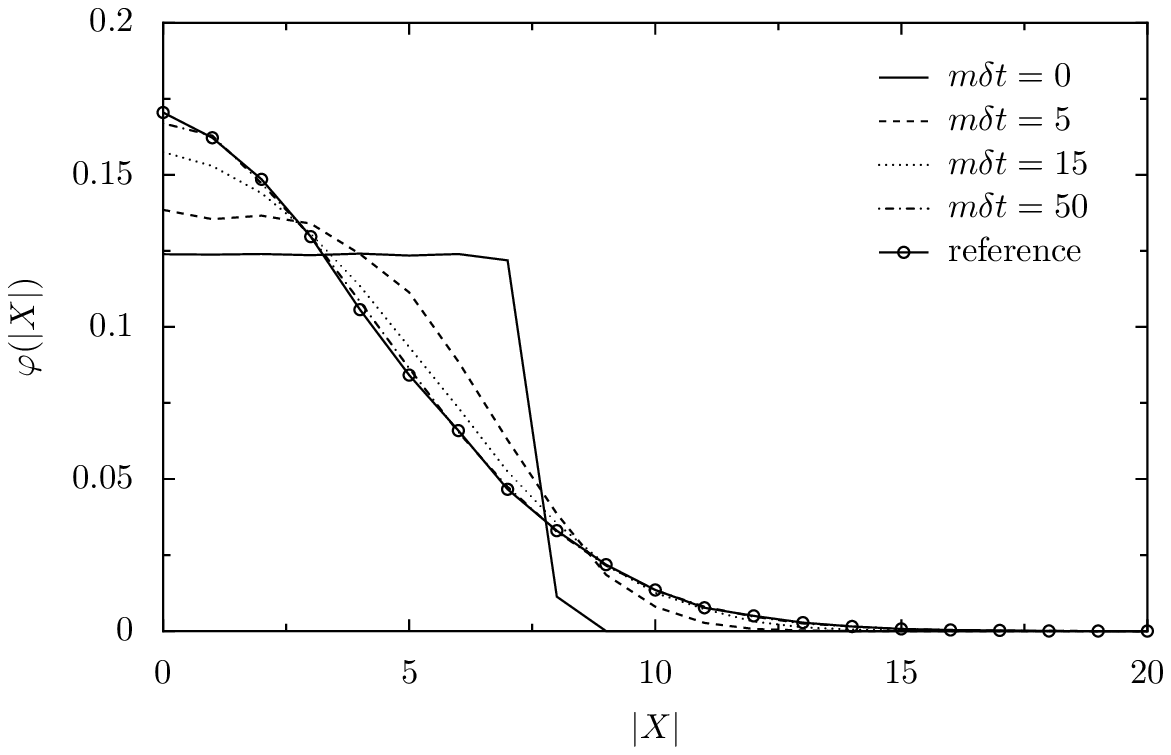}
\end{center}
\caption{\label{fig:fenep_constrained}Polymer distribution for FENE-P dumbbells during constrained simulation.  Shown are the polymer distribution before the restriction at $t=0.3$ (the reference distribution), and at several time instances during a constrained simulation starting from a uniform initial distribution. (The non-uniform appearance of the initial condition is due to artifacts of the binning.) Parameters of the simulation are given in the text.}
\end{figure}

We see that the distribution of the constrained simulation converges towards the distribution of the original simulation, indicating that the first even moment $M$ is indeed sufficient to represent the original polymer distribution, and also that the constrained simulation recovers this distribution. 

Note, however, that this experiment reveals an important property of FENE-P dumbbells.  While the manifold consisting of Gaussian distributions with zero mean is \emph{invariant}, there is \emph{no strong time-scale separation} between the relaxation of arbitrary distributions with given second moment towards the Gaussian distribution and evolution of this second moment itself. This can be concluded by noting that one needs to simulate the constrained SDE over a time interval of length $50$ to reach the stationary distribution, whereas the macroscopic state variable evolves significantly on considerably shorter time-scales, see also the next experiment. This was also observed in \cite{IlgValidity}.

\subsubsection{Coarse time-stepping}

We now look into the evolution of the numerical closure with respect to the full microscopic simulation. To this end, we simulate an ensemble of $N=2\cdot 10^4$ FENE-P dumbbells, starting from the equilibrium distribution $\varphi_{eq}$ in the absence of flow, up to time $t=2$.  All numerical parameters are the same as above. In particular, $\kappa(t)$ is again given by \eqref{eq:complex_flow}. We compare this reference simulation with a number of simulations using the coarse time-stepper with different values of the time step $\Delta t=K\delta t$.  In this experiment, the lifting step amounts to freezing physical time and performing a constrained simulation that is consistent with $M$. The constrained simulations are performed over a time interval of size $100\Delta t$. The results are shown in Figure~\ref{fig:fenep_cts}.
\begin{figure}
\begin{center}
		\includegraphics[scale=0.75]{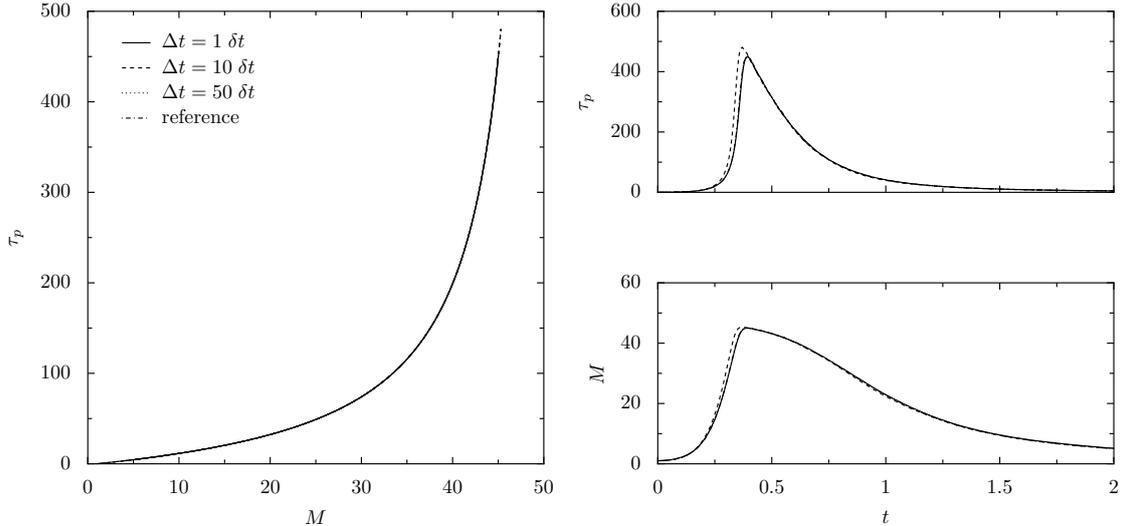}
\end{center}
\caption{\label{fig:fenep_cts}  Evolution of the first even moment $M$ and stress $\tau_p$ for an ensemble of FENE-P dumbbells during complex flow. Left: $(M,\tau_p)$ phase plane view.  Right: temporal evolution. Shown are a full microscopic simulation (reference), and simulations using a coarse time-stepper for different values of the macroscopic time-step.  Simulation parameters are given in the text.}
\end{figure}
We see that the results are nearly identical for all values of $\Delta t$ and the results nearly coincide with the reference simulation. 
This is to be expected.  Indeed, since $M$ completely determines the polymer distribution, a simulation constrained upon $M$ will not alter this distribution, see Section~\ref{sec:math}. 

\section{Comparison of numerical closure with quasi-equilibrium method}\label{sec:eq_ott}

In this section, we compare the proposed numerical closure approximation (described in Section~\ref{sec:ef}) with the quasi-equilibrium method proposed in~\cite{Ilg:2002p10825} (described in Section~\ref{sec:ott}).
In particular, we show that the quasi-equilibrium method, as proposed in \cite{Ilg:2002p10825}, is equivalent to the numerical closure approximation, when the velocity gradient $\kappa(t^*)$ is taken to zero in~\eqref{eq:constrained}. To prove this result, we need to show that the canonical distribution $\varphi^{QE}_\M$ reconstructed from the quasi-equilibrium method (see Equation~\eqref{eq:QE}) is the same as the distribution $\varphi^{NC}_\M$ reconstructed from the lifting procedure through the operator $\bar{\mathcal{L}}$ (see Equations~\eqref{eq:L-inf} and~\eqref{eq:psi_infty}).

Let us consider the microscopic model~\eqref{eq:mic1}--\eqref{eq:mic2}, with a general force $F$ which derives from a potential $\Pi$:
$$F=\nabla \Pi,$$
so that the equilibrium distribution (for zero velocity field) is
$$\varphi_{eq}=Z^{-1} \exp( - \Pi),$$
where $Z= \int \exp( - \Pi )$.
Let us consider a fixed given set of macroscopic state variables $\M$, and, for the sake of simplicity, let us assume that $L=1$ (only one macroscopic state variable $M$ is considered).

From the quasi-equilibrium method, the reconstructed distribution is (see Equation~\eqref{eq:QE}):
\begin{equation}\label{eq:phi_QE}
\varphi^{QE}_M(X)=Z^{QE}_M \exp \left( - \Pi(X) + \lambda m(X) \right),
\end{equation}
where $\displaystyle{Z^{QE}_\M= \int \exp \left( - \Pi(X) + \lambda \, m(X) \right) \, dX}$ and the single Lagrange multiplier $\lambda$ is determined by the constraint $\displaystyle \int m(X) \varphi^{QE}_M(X) \, dX = M$.

Let us now consider the numerical closure approximation described in Section~\ref{sec:ef}, with $\kappa(t^*)=0$ in~\eqref{eq:constrained}. In this case, since $\kappa(t^*)=0$ in~\eqref{eq:XNt}, the stationary distribution for~\eqref{eq:XNt} has a simple expression:
$$\psi^N(\infty,d\mathcal{X}) = (Z^N)^{-1} \prod_{n=1}^N \exp ( - \Pi(X^n) ) d \sigma_{\Sigma(M)},$$
where $\sigma_{\Sigma(M)}$ is the Lebesgue measure on the submanifold $\Sigma(M)$ defined by~\eqref{eq:sigma_M}.
We refer for example to~\cite{Lelievre:2008p9419} or~\cite[Proposition 3.20]{lelievre-rousset-stoltz-10}. Then, the marginal $\psi^N_1(\infty,X)$ is defined through (see~\eqref{eq:psi1N}):
\begin{equation}\label{eq:psi1Ninfty}
\psi^N_1(\infty,X^1)dX^1 = \int_{X^2,\ldots,X^N} \psi^N(\infty,dX^1,\ldots,dX^N),
\end{equation}
and the reconstructed distribution from the numerical closure approximation is (see~\eqref{eq:psi_infty}):
\begin{equation}\label{eq:phi_NC}
\varphi^{NC}_M(X)= \lim_{N \to \infty} \psi^N_1(\infty,X).
\end{equation}
The main mathematical result of this work is the following:
\begin{pro}\label{prop:eq_ott}
The reconstructed distributions obtained through the quasi-equilibrium method, and the numerical closure approximation method with zero gradient velocity field are the same:
$$\varphi^{QE}_M=\varphi^{NC}_M.$$
\end{pro}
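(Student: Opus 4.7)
The plan is to compute $\varphi^{NC}_M$ explicitly and verify it coincides with the formula~\eqref{eq:phi_QE} for $\varphi^{QE}_M$. Starting from the stationary law $\psi^N(\infty, d\mathcal X) = (Z^N)^{-1} \prod_n e^{-\Pi(X^n)}\, d\sigma_{\Sigma(M)}$ recalled just before the proposition, the co-area formula rewrites the surface measure as
\[
d\sigma_{\Sigma(M)} = |\nabla_{\mathcal X} \mathcal{R}(\mathcal X)|\, \delta\!\left(\tfrac{1}{N}\sum_{n=1}^N m(X^n) - M\right) d\mathcal X,
\]
with $|\nabla_{\mathcal X} \mathcal R|^2 = \tfrac{1}{N^2}\sum_n m'(X^n)^2$. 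Integrating out $X^2,\ldots,X^N$ in~\eqref{eq:psi1Ninfty} then yields, up to an $X^1$-independent normalization,
\[
\psi^N_1(\infty, X^1) \propto e^{-\Pi(X^1)} \int \left(\prod_{n=2}^N e^{-\Pi(X^n)}\right) |\nabla \mathcal R| \, \delta\!\left(\tfrac{1}{N}\sum_{n=1}^N m(X^n) - M\right) \prod_{n=2}^N dX^n,
\]
and the question reduces to extracting the $X^1$-dependence of this integral as $N \to \infty$.

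The key step is a saddle-point evaluation. Using the Fourier representation $\delta(y)=\frac{1}{2\pi}\int e^{isy}\,ds$, the $(X^2,\ldots,X^N)$-integral factorizes and produces $\Phi(is/N)^{N-1}$, where $\Phi(\lambda)=\int e^{-\Pi(X)+\lambda m(X)}\,dX$; after the rescaling $\lambda = is/N$, the remaining one-dimensional integral is governed, for large $N$, by the exponential $\exp\!\bigl(N[\phi(\lambda)-\lambda M]\bigr)$ with $\phi=\log\Phi$. Laplace's method selects the (real) saddle $\lambda^\star$ determined by $\phi'(\lambda^\star)=M$, i.e.\ $\int m(X)\varphi^\star(X)\,dX = M$ with $\varphi^\star \propto e^{-\Pi+\lambda^\star m}$, which is precisely the equation fixing the Lagrange multiplier $\lambda$ in~\eqref{eq:phi_QE}. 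The only factor retaining an $X^1$-dependence in the saddle-point expansion is $e^{\lambda^\star m(X^1)}$, so $\varphi^{NC}_M(X^1) \propto e^{-\Pi(X^1) + \lambda^\star m(X^1)}$, which matches $\varphi^{QE}_M$ after normalization.

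The main obstacle is the rigorous control of the two non-exponential pieces in the limit: the Jacobian $|\nabla \mathcal R|$ and the Gaussian fluctuations around the saddle. For the Jacobian, $\frac{1}{N}\sum_n m'(X^n)^2$ concentrates, under the constrained Gibbs measure, on the deterministic quantity $\int (m')^2 \varphi^\star$ by a law of large numbers for the weakly dependent coordinates, and thus contributes only an $X^1$-independent prefactor. A cleaner way to bypass these technicalities is to invoke the Gibbs conditioning principle (Csisz\'ar's conditional limit theorem): the conditional law of any single coordinate among $N$ i.i.d.\ samples from $\varphi_{eq}$, given the empirical constraint $\frac{1}{N}\sum m(X^n)=M$, converges as $N\to\infty$ to the minimizer of the relative entropy with respect to $\varphi_{eq}$ over $\Omega_M$, which is $\varphi^{QE}_M$ by the very definition~\eqref{eq:phiM_ott}. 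This viewpoint also clarifies why the assumption $\kappa(t^*)=0$ is essential: with nonzero velocity gradient the constrained SDE~\eqref{eq:XNt} is no longer reversible with respect to $\prod_n \varphi_{eq}(X^n)$, so $\psi^N(\infty,\cdot)$ is not of Boltzmann--Gibbs form and the identification with an entropy minimizer breaks down.
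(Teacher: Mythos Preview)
Your proposal is correct, and its rigorous core coincides with the paper's argument. The paper proceeds by (i) citing an equivalence-of-ensembles theorem (the conditional law of one coordinate under $\frac{1}{N}\sum m(X^n)=M$ converges to the exponentially tilted measure $\alpha_\lambda=\varphi^{QE}_M$), and (ii) proving a short lemma that the surface measure $\psi^N(\infty,\cdot)$ and the conditional measure differ only by the Jacobian $\|\nabla Q^N\|$, which by the co-area formula and a law of large numbers under the conditional measure becomes an $X^1$-independent constant, so Slutsky's lemma gives the same one-particle limit. This is exactly your combination of ``Gibbs conditioning principle for the conditional law'' plus ``LLN for $\frac{1}{N}\sum m'(X^n)^2$ to kill the Jacobian''.

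The saddle-point computation you sketch first (Fourier representation of $\delta$, factorization, Laplace method selecting $\lambda^\star$ with $\phi'(\lambda^\star)=M$) is additional heuristic content the paper does not include; it is the classical physicist's derivation of equivalence of ensembles and gives good intuition for why the Lagrange multiplier in~\eqref{eq:phi_QE} appears, but, as you yourself note, making it rigorous requires controlling the fluctuation prefactors, which is precisely what the cited equivalence-of-ensembles theorem packages. One small refinement: your closing remark about $\kappa(t^*)\neq 0$ is slightly imprecise in 1D, where any drift is a gradient; the point (which the paper also makes) is rather that for nonzero $\kappa$ the stationary law is no longer built from $\varphi_{eq}$, and in dimension $\geq 2$ with non-symmetric $\boldsymbol\kappa$ the drift is genuinely non-gradient so no Boltzmann--Gibbs form is available at all.
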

This proposition is a corollary of a general result about the equivalence (for an infinite number of particles) of the canonical ensemble and the microcanonical ensemble in statistical physics. We cite a result from~\cite[Theorem A.5.5]{bernardin-olla-10}, see also~\cite[Theorem 3.4]{guo-papanicolaou-varadhan-88}:
\begin{thm}\label{theo:eq_ens}
Let $\alpha$ be a probability measure on $\RR^d$ and let us consider $Y^1,\ldots,Y^N$ i.i.d. random variables with law $\alpha$, and introduce a function  $q: \RR^d \to \RR$. Let us now define two probability measures: 
\begin{itemize}
\item The conditional measure
$$\nu^{N}_{|z}\left(dy^1,\ldots,dy^N\right)=\alpha^{\otimes N} \left(dy^1,\ldots,dy^N\Bigg| \dfrac{1}{N} \sum_{n=1}^N q(y_n)=z\right)$$
of the vector $(Y^1,...,Y^N)$ conditionally to $\dfrac{1}{N} \sum_{n=1}^N q(Y^n)=z$.
\item The probability measure
$$\alpha_{\lambda}(dy) = Z_{\lambda}^{-1} \exp( \lambda q (y)) \, \alpha(dy),$$
\end{itemize}
where $Z_{\lambda}=\int \exp( \lambda q (y)) \, \alpha(dy)$. Let us assume that $\lambda$ and $z$ are related through the relation:
$$\int q(y) \alpha_{\lambda}(dy) = z.$$
Then, one has: for any test function $F: \RR^d \to \RR$,
$$\lim_{N \to \infty} \int F(y^1) \, \nu^N_{|z}\left(dy^1,\ldots,dy^N\right) =  \int F(y^1) \alpha_\lambda(dy^1).$$
\end{thm}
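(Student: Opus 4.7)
\textbf{Proof plan for Theorem~\ref{theo:eq_ens}.}

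The plan is to establish the equivalence of ensembles by a tilting trick followed by a local central limit theorem for the empirical sum of the $q(Y^n)$.

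\emph{Step 1: Tilting reduction.} First I would observe that $\nu^N_{|z}$ is unchanged when the reference measure $\alpha^{\otimes N}$ is replaced by the tilted product $\alpha_\lambda^{\otimes N}$. Indeed, on the set $\{\frac{1}{N}\sum_{n=1}^N q(y^n)=z\}$ the Radon--Nikodym derivative
$$\frac{d\alpha_\lambda^{\otimes N}}{d\alpha^{\otimes N}}(y^1,\ldots,y^N) \;=\; Z_\lambda^{-N}\exp\!\Big(\lambda \sum_{n=1}^N q(y^n)\Big)$$
takes the \emph{constant} value $Z_\lambda^{-N} e^{N\lambda z}$ and therefore cancels between numerator and denominator when forming the conditional law. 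Thus
$$\nu^N_{|z} \;=\; \alpha_\lambda^{\otimes N}\!\left(\,\cdot\;\Big|\;\tfrac{1}{N}\sum_{n=1}^N q(Y^n)=z\right).$$
This step does not yet use the hypothesis $\int q\,d\alpha_\lambda=z$; its sole purpose is to re-centre the problem so that, under the new reference measure, the conditioning variable has mean equal to the conditioning value.

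\emph{Step 2: Conditional marginal via densities of the sum.} Under $\alpha_\lambda^{\otimes N}$ the variables $q(Y^n)$ are i.i.d.\ with mean $z$ and variance $\sigma_\lambda^2 := \int (q-z)^2\,d\alpha_\lambda$. Writing $S_k:=\sum_{n=1}^k q(Y^n)$ and letting $p_k^\lambda$ denote its density (I treat the non-lattice case for concreteness; the lattice case is strictly analogous with probability mass functions in place of densities), the independence of $Y^1$ and $S_{N-1}=\sum_{n=2}^N q(Y^n)$ gives the explicit formula
$$\int F(y^1)\,\nu^N_{|z}(dy^1,\ldots,dy^N) \;=\; \int F(y)\,\frac{p_{N-1}^\lambda\bigl(Nz-q(y)\bigr)}{p_N^\lambda(Nz)}\,\alpha_\lambda(dy).$$
The theorem reduces to showing that the density ratio converges to $1$ in an $\alpha_\lambda$-suitable sense as $N\to\infty$.

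\emph{Step 3: Local CLT and passage to the limit.} The analytic core of the argument is a local central limit theorem for $S_N$ under $\alpha_\lambda^{\otimes N}$. Under standard non-degeneracy and moment assumptions on $q(Y)$ (which are implicit in the statement and certainly hold in the polymer application of Proposition~\ref{prop:eq_ott}), Gnedenko's local CLT yields the pointwise asymptotics
$$p_N^\lambda(Nz) \;=\; \frac{1+o(1)}{\sqrt{2\pi N\sigma_\lambda^2}}, \qquad p_{N-1}^\lambda\bigl(Nz-q(y)\bigr) \;=\; \frac{1+o(1)}{\sqrt{2\pi (N-1)\sigma_\lambda^2}}\exp\!\left(-\tfrac{(q(y)-z)^2}{2(N-1)\sigma_\lambda^2}\right),$$
so the density ratio converges pointwise to $1$ for every fixed $y$. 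Combining this with a uniform-in-$N$ $\alpha_\lambda$-integrable majorant (obtained from the same local CLT together with standard tail estimates) and invoking dominated convergence with $F$ bounded gives
$$\int F(y)\,\frac{p_{N-1}^\lambda(Nz-q(y))}{p_N^\lambda(Nz)}\,\alpha_\lambda(dy) \;\longrightarrow\; \int F(y)\,\alpha_\lambda(dy),$$
which is the claimed convergence.

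The hard part is Step~3: a classical CLT is insufficient because the conditioning event $\{S_N=Nz\}$ has vanishing probability, so one genuinely needs pointwise density asymptotics, and one needs them uniformly enough in $y$ to dominate the ratio. This requires non-degeneracy of $q$ under $\alpha_\lambda$ (non-lattice in the continuous case, aperiodicity in the lattice case) together with enough moments to support a local CLT with an error term, which is the classical technical heart of any equivalence-of-ensembles result. An alternative route would use large deviations: Sanov's theorem combined with a Gibbs conditioning principle identifies $\alpha_\lambda$ as the unique minimiser of the relative entropy $H(\,\cdot\,|\alpha)$ under the linear constraint $\int q\,d\mu=z$, but converting convergence under a \emph{neighbourhood} $\{|S_N/N-z|<\varepsilon\}$ into convergence under \emph{exact} conditioning still ultimately rests on a local CLT.
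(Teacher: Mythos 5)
The paper does not prove Theorem~\ref{theo:eq_ens} at all: it is quoted as a known equivalence-of-ensembles result, with pointers to \cite[Theorem A.5.5]{bernardin-olla-10} and \cite[Theorem 3.4]{guo-papanicolaou-varadhan-88}, and the paper's own contribution is the subsequent lemma relating the conditional measure to the stationary law of the constrained dynamics. Your plan --- tilting $\alpha$ to $\alpha_\lambda$ so that the conditioning value $z$ becomes the mean (your Step 1, where the constancy of the Radon--Nikodym derivative on the level set is the right observation), rewriting the conditioned marginal as the density ratio $p_{N-1}^{\lambda}\bigl(Nz-q(y)\bigr)/p_{N}^{\lambda}(Nz)$ integrated against $\alpha_\lambda$, and concluding by a local central limit theorem plus a uniform bound on the ratio to justify dominated convergence --- is precisely the classical argument that underlies the cited proofs, so you are reproducing essentially the same approach as the (externally sourced) proof rather than offering a genuinely different route. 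The only caveats, which you largely flag yourself, are that the theorem as stated suppresses the hypotheses your argument needs: non-degeneracy of $q$ under $\alpha_\lambda$ (non-lattice, or aperiodicity in the lattice case) and enough moments or a bounded density to run the local CLT, boundedness of the test function $F$, and a disintegration (regular conditional probability) reading of the conditioning on the null event $\{ \frac{1}{N}\sum_{n=1}^N q(Y^n)=z \}$, which is also what makes Step 1 rigorous; with those made explicit, your Steps 1--3 are correct in outline, and the hypothesis $\int q \, d\alpha_\lambda = z$ enters exactly where you say it does, namely in placing $Nz$ at the centre of the CLT window so the denominator density does not decay at a large-deviation rate.
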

To apply Theorem~\ref{theo:eq_ens} to prove Proposition~\ref{prop:eq_ott}, we set $\alpha$ to be the equilibrium distribution $\varphi_{eq}$, $q=m$, and $z=M$. Then $\alpha_\lambda=\varphi^{QE}_M$, and it remains to show that
$$\lim_{N \to \infty} \int F(y^1) \, \nu^N_{|z}\left(dy^1,\ldots,dy^N\right) = \int F(y^1) \varphi^{NC}_M(y^1) \, dy^1.$$
This is stated in the following lemma:
\begin{lem}
Let us consider the notation of Theorem~\ref{theo:eq_ens} and assume that the measure $\alpha$ has a density $a$: $$\alpha(dy)=a(y)\,dy.$$ Let us introduce the probability measure
$$\nu^N_{\Sigma(z)} (dy^1, \ldots, dy^N) = a(y^1)\cdots a(y^N) \sigma_{\Sigma^N(z)} (dy^1, \ldots, dy^N),$$
where $\Sigma^N(z)=\{(y^1, \ldots, y^N), \, \frac{1}{N} \sum_{n=1}^N q(y^n)=z \}$ and $\sigma_{\Sigma^N(z)}$ is the Lebesgue measure on the submanifold $\Sigma^N(z)$. Then,
\begin{equation}\label{eq:co_area}
\nu^N_{\Sigma^N(z)} (dy^1, \ldots, dy^N) = \|\nabla Q^N \| \, \nu^{N}_{|z}\left(dy^1,\ldots,dy^N\right),
\end{equation}
where $Q^N(y^1, \ldots, y^N)=\frac{1}{N} \sum_{n=1}^N q(y^n)$. Moreover,
\begin{equation}\label{eq:lim}
\lim_{N \to \infty} \int F(y^1) \, \nu^N_{|z}\left(dy^1,\ldots,dy^N\right) =\lim_{N \to \infty} \int F(y^1) \, \nu^N_{\Sigma^N(z)}\left(dy^1,\ldots,dy^N\right).
\end{equation}
\end{lem}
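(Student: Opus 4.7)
The plan is to prove (\ref{eq:co_area}) from the co-area formula together with the disintegration of measures, and then to deduce (\ref{eq:lim}) from (\ref{eq:co_area}) via a law-of-large-numbers concentration argument for the factor $\|\nabla Q^N\|$ under the conditional law $\nu^N_{|z}$.

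For (\ref{eq:co_area}), I would apply the co-area formula to the submersion $Q^N : \RR^{dN} \to \RR$, writing, for any bounded continuous test function $f$,
\begin{equation*}
\int f \, d\alpha^{\otimes N} \;=\; \int f(y) \, a(y^1)\cdots a(y^N) \, dy \;=\; \int_{\RR} \left( \int_{\Sigma^N(z)} \frac{f \, a^{\otimes N}}{\|\nabla Q^N\|} \, d\sigma_{\Sigma^N(z)} \right) dz .
\end{equation*}
Comparing the inner integral with the density $\rho_{Q^N}(z)$ of $Q^N$ under $\alpha^{\otimes N}$ and with the defining property of the conditional law yields
\begin{equation*}
\nu^N_{|z}(dy) \;=\; \frac{a^{\otimes N}(y)}{\rho_{Q^N}(z) \, \|\nabla Q^N\|(y)} \, \sigma_{\Sigma^N(z)}(dy),
\end{equation*}
whereas $\nu^N_{\Sigma^N(z)}$ is, up to its normalization, just $a^{\otimes N} \sigma_{\Sigma^N(z)}$. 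Dividing one expression by the other gives (\ref{eq:co_area}), with the overall multiplicative constant chosen so that both sides remain probability measures.

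For (\ref{eq:lim}), using (\ref{eq:co_area}) and renormalizing gives
\begin{equation*}
\int F(y^1) \, \nu^N_{\Sigma^N(z)}(dy) \;=\; \frac{\int F(y^1) \, \|\nabla Q^N\|(y) \, \nu^N_{|z}(dy)}{\int \|\nabla Q^N\|(y) \, \nu^N_{|z}(dy)} .
\end{equation*}
The key observation is that
\begin{equation*}
\|\nabla Q^N\|^2 \;=\; \frac{1}{N}\cdot \frac{1}{N}\sum_{n=1}^N |\nabla q(y^n)|^2,
\end{equation*}
so that $\sqrt{N}\,\|\nabla Q^N\|$ is essentially the square root of an empirical mean, and the single-coordinate contribution in $y^1$ is a lower-order $O(1/N)$ perturbation of it. Applying Theorem~\ref{theo:eq_ens} to the test function $|\nabla q|^2$ shows that this empirical mean concentrates under $\nu^N_{|z}$ at the deterministic constant $\int |\nabla q|^2\, d\alpha_\lambda$. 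Hence $\|\nabla Q^N\|(y)$ becomes asymptotically independent of $y^1$, can be pulled out of both integrals, and cancels between numerator and denominator, leaving $\int F(y^1)\, \nu^N_{|z}(dy)$ in the limit $N \to \infty$.

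The main obstacle will be turning the heuristic concentration of $\|\nabla Q^N\|$ into a rigorous equality of limits, particularly when $F$ or $|\nabla q|^2$ is unbounded (as happens for FENE-type choices of $m$ involving the singular force). The cleanest route is a uniform-integrability argument: control $\int |F(y^1)|^p \, (\sqrt{N}\|\nabla Q^N\|)^p \, \nu^N_{|z}(dy)$ for some $p>1$ by moment bounds coming from the same equivalence-of-ensembles machinery underlying Theorem~\ref{theo:eq_ens}, so that a standard truncation-and-pass-to-the-limit argument closes the proof.
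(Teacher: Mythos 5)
Your proposal follows essentially the same route as the paper: \eqref{eq:co_area} via the co-area formula, and \eqref{eq:lim} by writing the integral against $\nu^N_{\Sigma^N(z)}$ as a ratio of $\|\nabla Q^N\|$-weighted expectations under $\nu^N_{|z}$, using $\|\nabla Q^N\| = N^{-1/2}\bigl(\tfrac1N\sum_{n}\|\nabla q\|^2(Y^n)\bigr)^{1/2}$ and cancelling the concentrating factor between numerator and denominator. The only nuance is that the concentration in probability of this empirical mean under the conditional measure does not follow from Theorem~\ref{theo:eq_ens} alone (which gives only convergence of one-particle expectations); the paper instead invokes a law of large numbers for the conditional (microcanonical) ensemble and concludes with Slutsky's lemma, which is the clean substitute for your uniform-integrability sketch.
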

\begin{proof}
The proof of~\eqref{eq:co_area} is based on the co-area formula, see for example~\cite[Eq. (3.14)]{lelievre-rousset-stoltz-10}. Then, to prove~\eqref{eq:lim}, one notice that, if $Y^1,\ldots Y^N$ denotes random variables distributed according to the conditional probability measure $\nu^N_{|z}$, one has:
\begin{align*}
\int F(y^1) \, \nu^N_{\Sigma^N(z)}\left(dy^1,\ldots,dy^N\right)
&= \frac{\left\langle F(Y^1) \|\nabla Q^N \| (Y^1, \ldots , Y^N) \right\rangle}{\left\langle \|\nabla Q^N \| (Y^1, \ldots , Y^N) \right\rangle} \\
&= \frac{\left\langle F(Y^1) \sqrt{\frac{1}{N} \sum_{n=1}^N \|\nabla q \|^2 (Y^n)} \right\rangle}{\left\langle \sqrt{\frac{1}{N} \sum_{n=1}^N \|\nabla q \|^2 (Y^n)} \right\rangle}.
\end{align*}
 By a law of large numbers (see for example~~\cite[Theorem A.5.4]{bernardin-olla-10} or~\cite[Theorem 3.5]{guo-papanicolaou-varadhan-88}), $\displaystyle \frac{1}{N} \sum_{n=1}^N \|\nabla q \|^2 (Y^n)$ converges in probability to $\displaystyle \int \|\nabla q \|^2 d \alpha_{\lambda}$, and thus, Slutsky lemma enables to conclude. 
\end{proof}
This concludes the proof of Proposition~\ref{prop:eq_ott}, since with the notation introduced above ($\alpha(dy)=\varphi_{eq}(y) dy$, $q=m$, and $z=M$) $$\nu^N_{\Sigma^N(z)} (dy^1, \ldots, dy^N)=\psi^N(\infty,dy^1, \ldots, dy^N).$$

A few remarks are in order. First, in dimension 1, the fact that the drift in the SDE derives from a potential is not a restrictive assumption, so that the quasi-equilibrium procedure could also be applied when taking into account a non-zero $\kappa(t^*)$. However, this assumption is indeed restrictive in dimension greater than one: for non-symmetric ${\boldsymbol \kappa}(t^*)$, the drift in~\eqref{eq:mic2_lag} is not the gradient of a potential. In this case, the numerical closure approximation procedure still applies, but it is unclear how it would be related to a quasi-equilibrium method. In some sense, the numerical closure method can thus be seen as a generalization of the quasi-equilibrium method, which takes into account the velocity gradient in the lifting procedure. In fact, the numerical closure procedure can be seen as a simple alternative to simulate the quasi-equilibrium closures that, unlike the numerical procedure in \cite{Ilg:2002p10825}, does not require transformations from moments to Lagrange multipliers and vice versa, which might be difficult to perform.

\section{Numerical illustrations for FENE dumbbells}\label{sec:num-fene}

In this section, we perform some numerical experiments to explore the behaviour of the numerical closure procedure using the strategies for macroscopic state variable detection that were outlined in Section~\ref{sec:strat}.

\subsection{Strategy 1: Even moments as macroscopic state variables}

\subsubsection{Lifted configuration distributions}

 We simulate an ensemble of $N=5\cdot 10^4$ FENE dumbbells, subject to a constant velocity gradient $\kappa(t)=2$ over the time interval $t\in [0,t^*]$, with $t^*=0.5,1,1.5,2$ (startup of ``elongational'' flow). We use nondimensional parameters $b=49$ and $\We=\epsilon=1$, and choose $\delta t=2\cdot 10^{-4}$. As the initial condition, we take the equilibrium polymer distribution in the absence of flow. As the macroscopic state variables, we take the first $L$ even moments. At $t=t^*$, we obtain $\M^*=\mathcal{R}(\mathcal{X}^*)$ via restriction; the corresponding polymer distribution is kept as the reference distribution. Starting from $\mathcal{X}^*$, we then perform a constrained simulation under the constraint that $\mathcal{R}(\mathcal{X})=\M^*$, using the same time-step $\d t$, until the polymer distribution equilibrates.   Figure~\ref{fig:fene_nblin} shows the constrained equilibrium polymer distributions for a range of values of $L$.   
\begin{figure}
\begin{center}
	\includegraphics[scale=0.6]{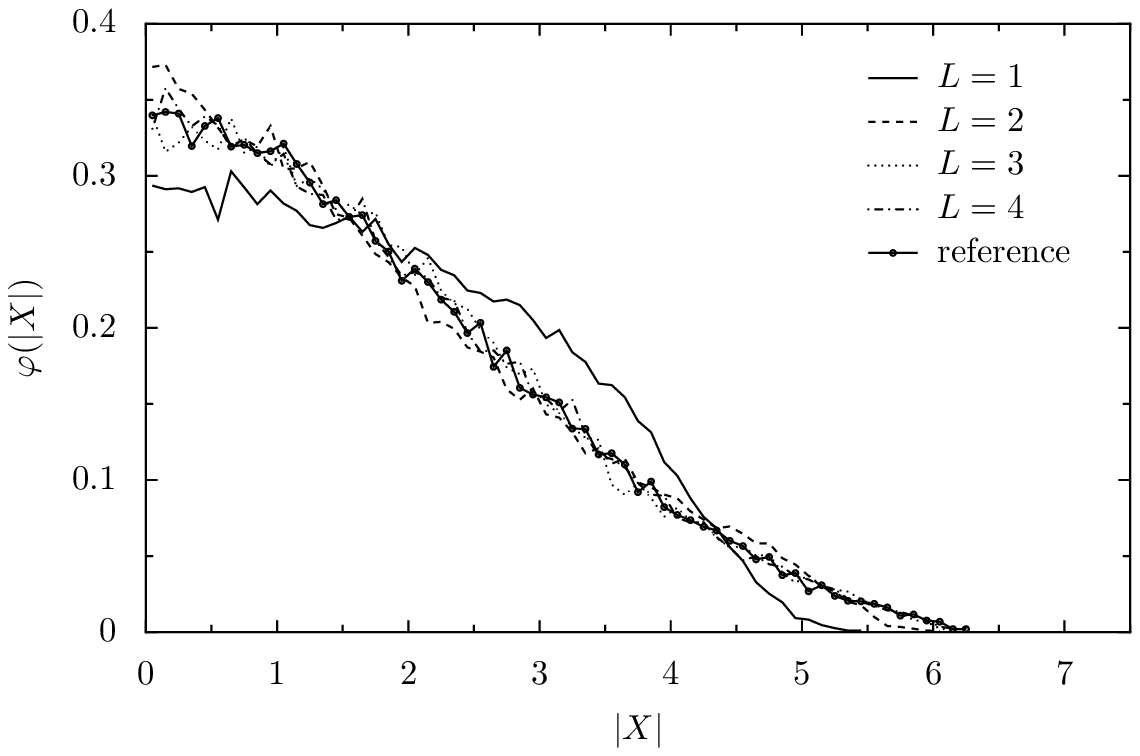}
	\includegraphics[scale=0.6]{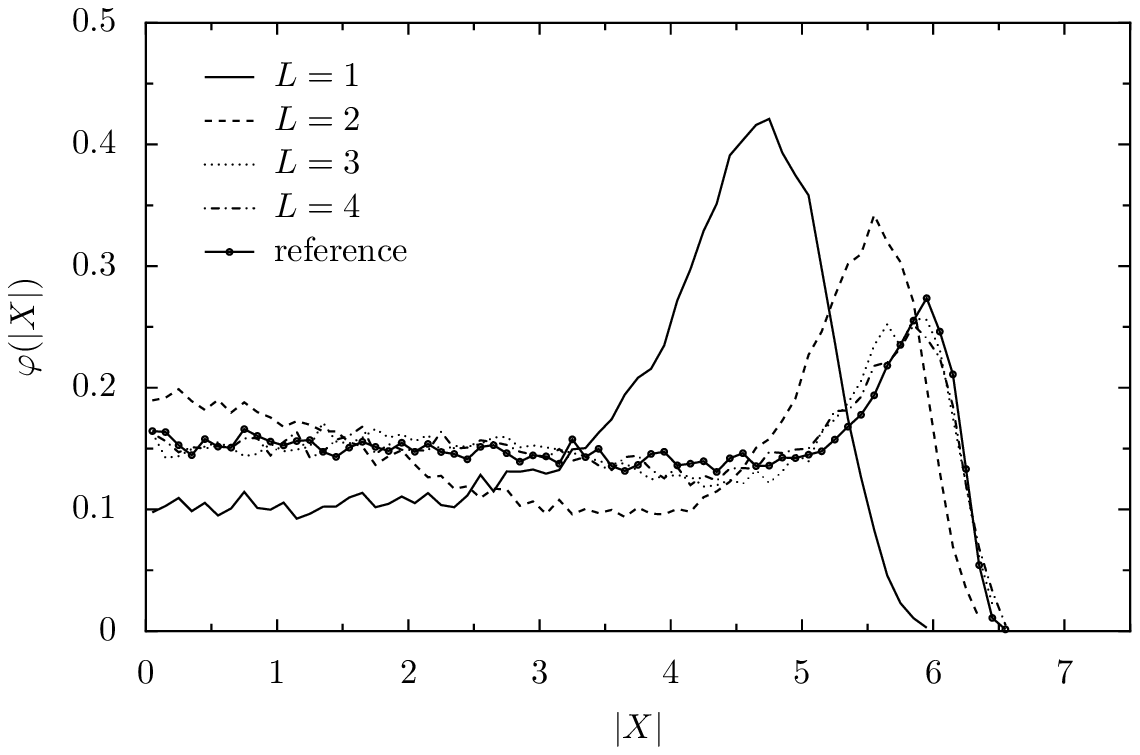}
	\includegraphics[scale=0.6]{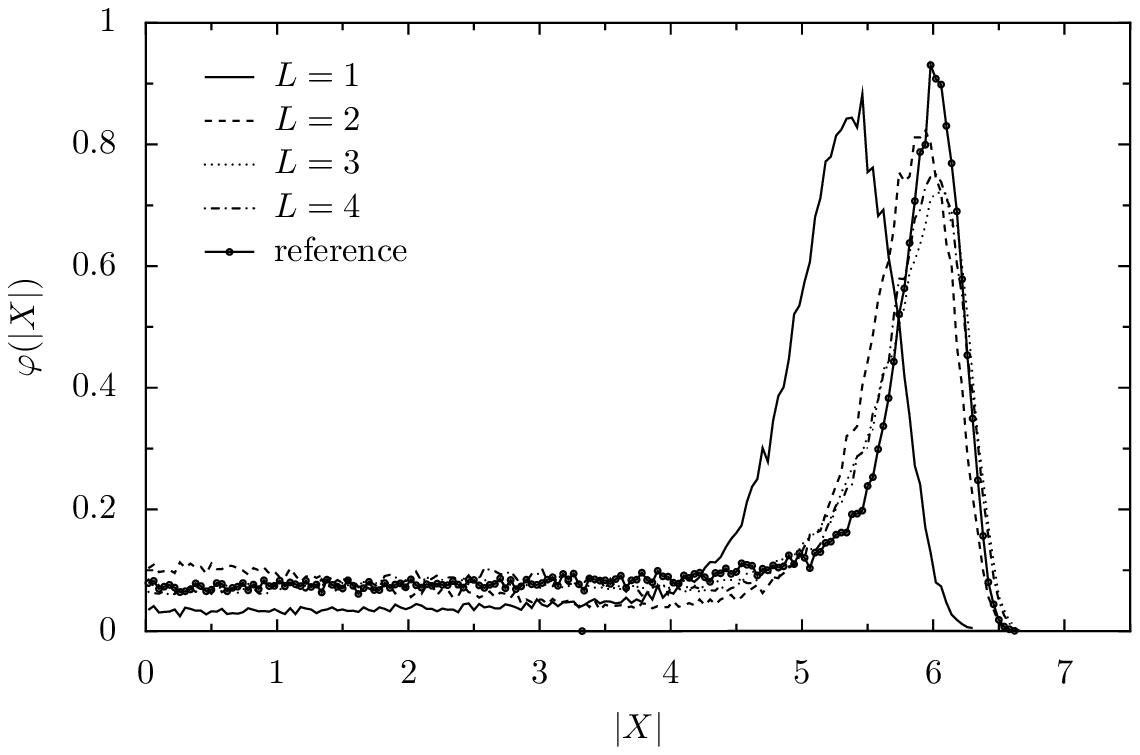}
	\includegraphics[scale=0.6]{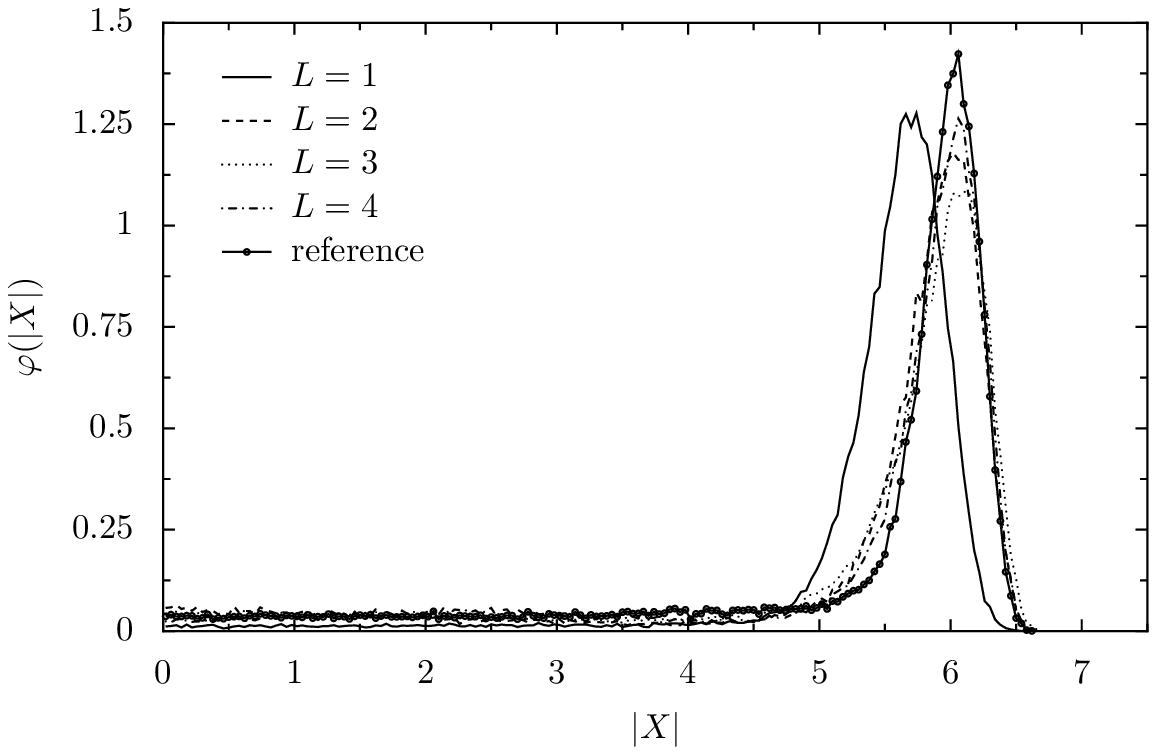}
\end{center}
\caption{\label{fig:fene_nblin} Lifted polymer distributions for FENE dumbbells as a function of the number of macroscopic state variables using strategy 1.  We plot a reference polymer distribution, that is obtained by microscopic simulation up to time $t^*$, as well as the equilibrium polymer distributions after constrained simulation using $L=1,\ldots,4$ even moments. Shown are the results for $t^*=0.5$ (top left), $t^*=1$ (top right), $t^*=1.5$ (bottom left) and $t^*=2$ (bottom right). Simulation parameters are given in the text.}
\end{figure}
We see that, as the number of macroscopic state variables increases, the difference decreases between the constrained equilibrium distribution and the reference distribution, indicating that this distribution is captured more accurately when more macroscopic state variables are used. 

\subsubsection{Relaxation to equilibrium and comparison with quasi-equilibrium approach\label{sec:relax}}

We now repeat the above experiment with $N=2000$ particles and $t^*=1$, and plot the evolution of the polymer stress $\tau_p$ as a function of time.  All other simulation parameters are as above. Moreover, to obtain the corresponding result for the quasi-equilibrium method of \cite{Ilg:2002p10825}, we perform the same experiment, but now with $\kappa(t)=0$ throughout the constrained simulations.  We ensured that both constrained simulations were performed using the same random numbers.  The results are shown in Figure~\ref{fig:ottinger}.
\begin{figure}
\begin{center}
	\includegraphics[scale=0.7]{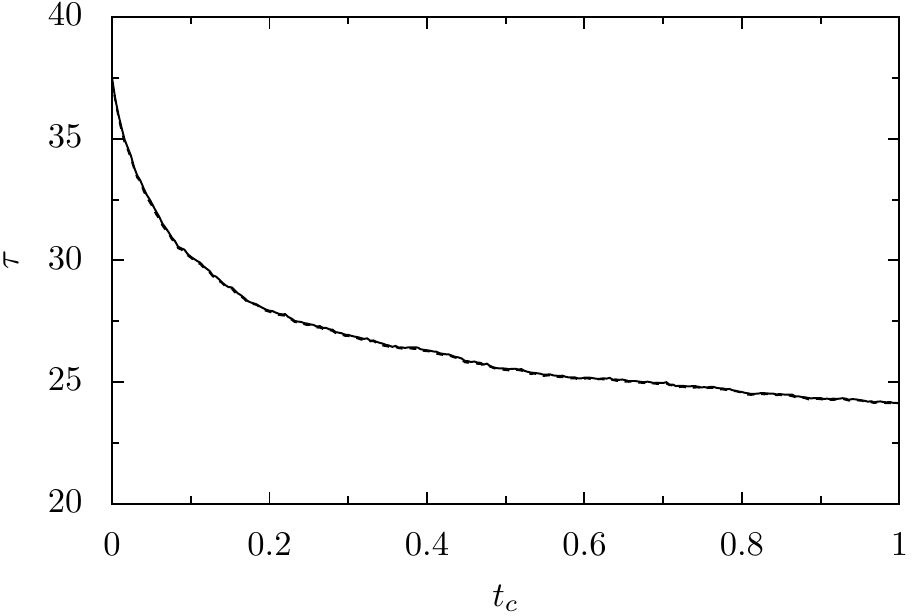}
	\includegraphics[scale=0.7]{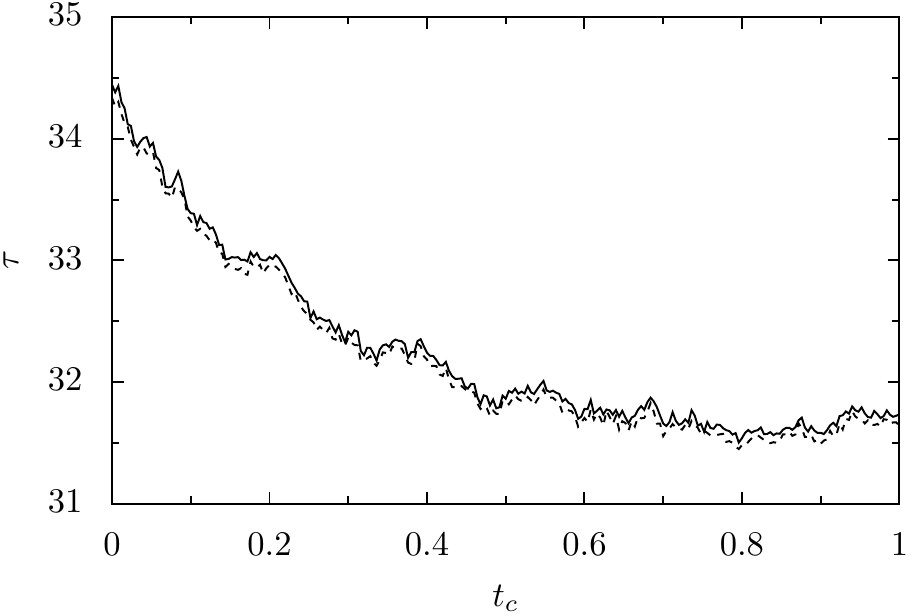}
	\includegraphics[scale=0.7]{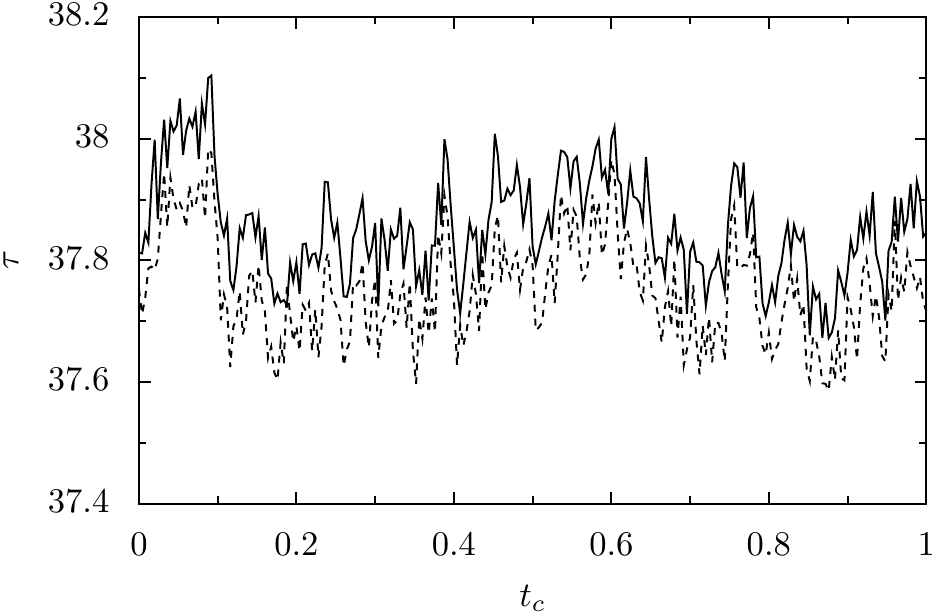}
	\includegraphics[scale=0.7]{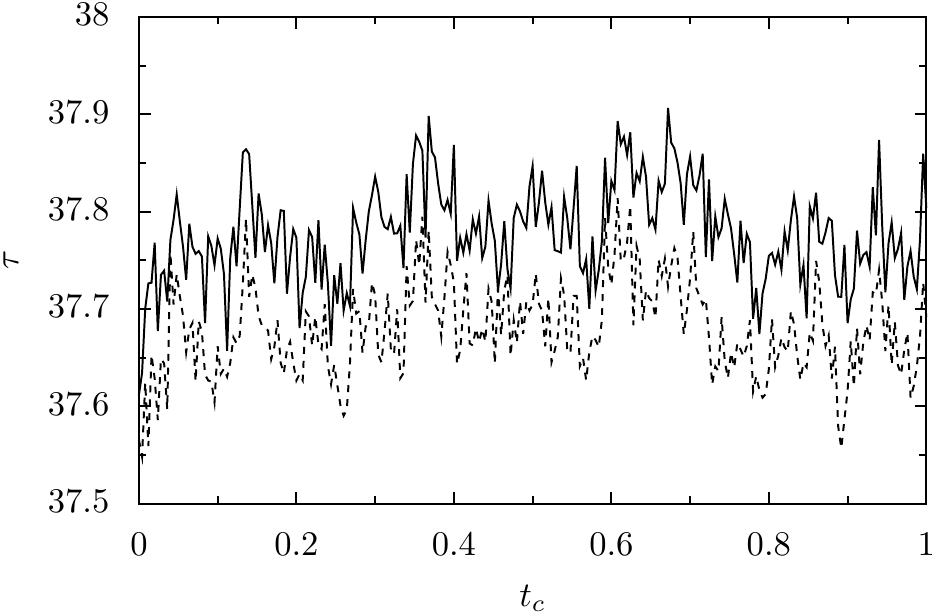}
\end{center}
\caption{\label{fig:ottinger} Evolution of the stress tensor $\tau_p$ throughout constrained simulation using strategy 1 with $L=1$ (top left), $L=2$ (top right), $L=3$ (bottom left) and $L=4$ (bottom right). Solid lines are obtained using the procedure outlined in Section~\ref{sec:ef}; dashed lines correspond to the quasi-equilibrium approximation. Simulation parameters are given in the text.}
\end{figure}
The figures clearly show a relaxation towards the stress value that corresponds to the lifted polymer distribution. This fact can be used to detect when the constrained simulation has equilibrated, and hence to determine the parameter $m_\infty$ that was introduced when defining the lifting operator in Section~\ref{sec:ef}. When using the other strategies to determine the hierarchy of macroscopic state variables, $\tau_p$ belongs to the set of macroscopic state variables, and therefore does not change during relaxation. However, in similar experiments, not reported here, we observed similar behaviour when monitoring the first even moment that was not constrained.

Moreover, when the number of macroscopic state variables increases, the stress $\tau_p$ that corresponds to the lifted distribution approaches the stress associated with the distribution that corresponds to the initial condition of the constrained simulation.  This observation is in agreement with the previous experiment, where we showed that the distributions themselves approach the initial distribution of the constrained simulation when more moments are taken into account.  Hence, monitoring the evolution of $\tau_p$ during constrained simulation can be used to determine whether the currently used set of macroscopic state variables is sufficient.  Finally, concerning the relation between the numerical closure and the quasi-equilibrium approximation, we see that the difference between the two approaches is not really large; however, this difference remains of the same order of magnitude, independently of the number of macroscopic state variables included.  

\subsubsection{Coarse time-stepping}

We now look into the evolution of the numerical closure with respect to the full microscopic simulation, again using $\kappa(t)=2$. To this end, we simulate an ensemble of $N=2000$ FENE dumbbells, starting from the equilibrium distribution in the absence of flow, up to time $t=4$.  All  parameters are the same as above.  We compare this reference simulation with a simulation via the coarse time-stepper, using a range of values for the number $L$ of macroscopic state variables; here, the macroscopic time-step is equal to one microscopic step $\delta t$, {\em i.e.}\ $K=1$.  In this experiment, the lifting step amounts to freezing physical time and performing a constrained simulation that is consistent with $\M$. The constrained simulations are performed until equilibrium of the distribution is reached (here using $m_\infty=50$ constrained time steps of size $\delta t$); all simulations were verified to have converged with respect to the number of constrained time-steps.  The results are shown in Figure~\ref{fig:nblin_cts_startup_notau}.
\begin{figure}
\begin{center}	\includegraphics[scale=0.75]{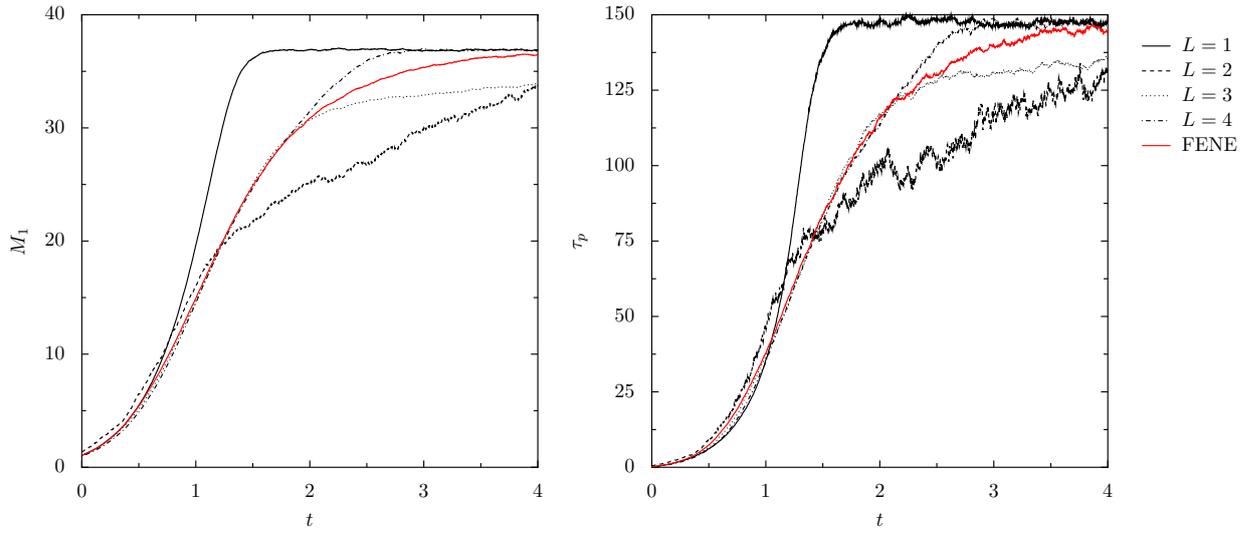}
\end{center}
\caption{\label{fig:nblin_cts_startup_notau}  Evolution of first even moment $M_1$ (left) and stress $\tau_p$ (right) for an ensemble of FENE dumbbells during startup of elongational flow. Shown are a full microscopic simulation (reference), and simulations using a coarse time-stepper for different numbers $L$ macroscopic state variables using strategy 1.  Simulation parameters are given in the text.}
\end{figure}
We clearly see that the approximation improves as a function of the number of moments that are included at the macroscopic level.  Other experiments, not reported here, indicate that the higher $\kappa(t)$, the higher the number of macroscopic state variables that needs to be considered. These results are in line with the conclusions in \cite{Ilg:2002p10825}, where analytical (quasi-equilibrium) closures were obtained via an entropy maximization principle.

Finally, we consider an ensemble of $N=2000$ FENE dumbbells subject to the time-dependent flow field \ref{eq:complex_flow}, and again look at a coarse time-stepper in which the macroscopic state is represented with an increasing number of even moments. For this test, $m_\infty=100$; all remaining simulation parameters are as above.  The results are shown in Figure~\ref{fig:fene_cts_notau}.
\begin{figure}
\begin{center}
		\includegraphics[scale=0.75]{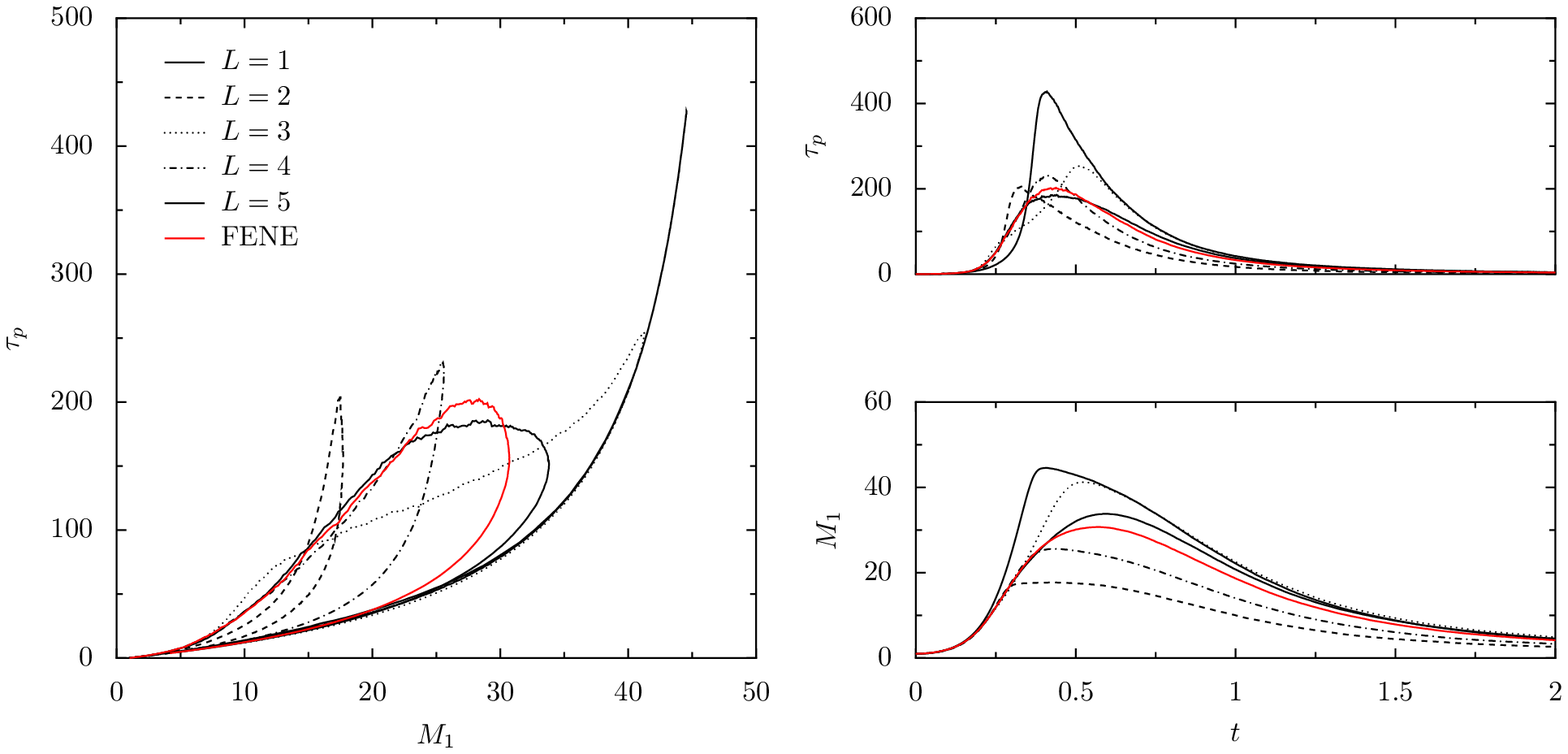}
\end{center}
\caption{\label{fig:fene_cts_notau}  Evolution of first even moment $M_1$ and stress $\tau_p$ for an ensemble of FENE dumbbells during complex flow. Left: $(M_1,\tau_p)$ phase plane view.  Right: temporal evolution. Shown are a full microscopic simulation (reference), and simulations using a coarse time-stepper for different numbers of macroscopic state variables using strategy 1. Simulation parameters are given in the text.}
\end{figure}
The conclusions for this experiment are similar.  Note that a macroscopic description with only one moment cannot capture the hysteretic effect of the FENE dumbbells.

\subsection{Strategy 2: Adding the stress tensor as a macroscopic variable\label{sec:num-fene2}}

One particular advantage of the numerical closure strategy described here is that one can readily consider the effect of considering more complicated moments in the set of macroscopic state variables. In this section, we repeat the above experiments, now considering the first $L-1$ even moments, supplemented with the stress $\tau_p$ itself as a macroscopic variable, {\em i.e.}, $\M=(M_l)_{l=1}^{L}$ with $M_l=\langle X^{2l}\rangle$ for $1\le l \le L-1$, as before, and  $M_L=\tau_p$. 

\subsection{Lifted configuration distributions}

We again simulate an ensemble of $N=5\cdot 10^4$ FENE dumbbells, subject to a constant velocity gradient $\kappa(t)=2$ over the time interval $t\in [0,t^*]$, with $t^*=0.5,1,1.5,2$ (startup of elongational flow) and obtain $\M^*=\mathcal{R}(\mathcal{X}^*)$ via restriction; the corresponding polymer distribution is kept as a reference distribution. We perform a constrained simulation, starting from $\mathcal{X}^*$, under the constraint that $\mathcal{R}(\mathcal{X})=\M^*$ using the same time-step $\d t$, until the polymer distribution equilibrates. Figure \ref{fig:fene_nblin_tau} shows the constrained equilibrium polymer distributions for a range of values of $L$.   
\begin{figure}
\begin{center}
	\includegraphics[scale=0.6]{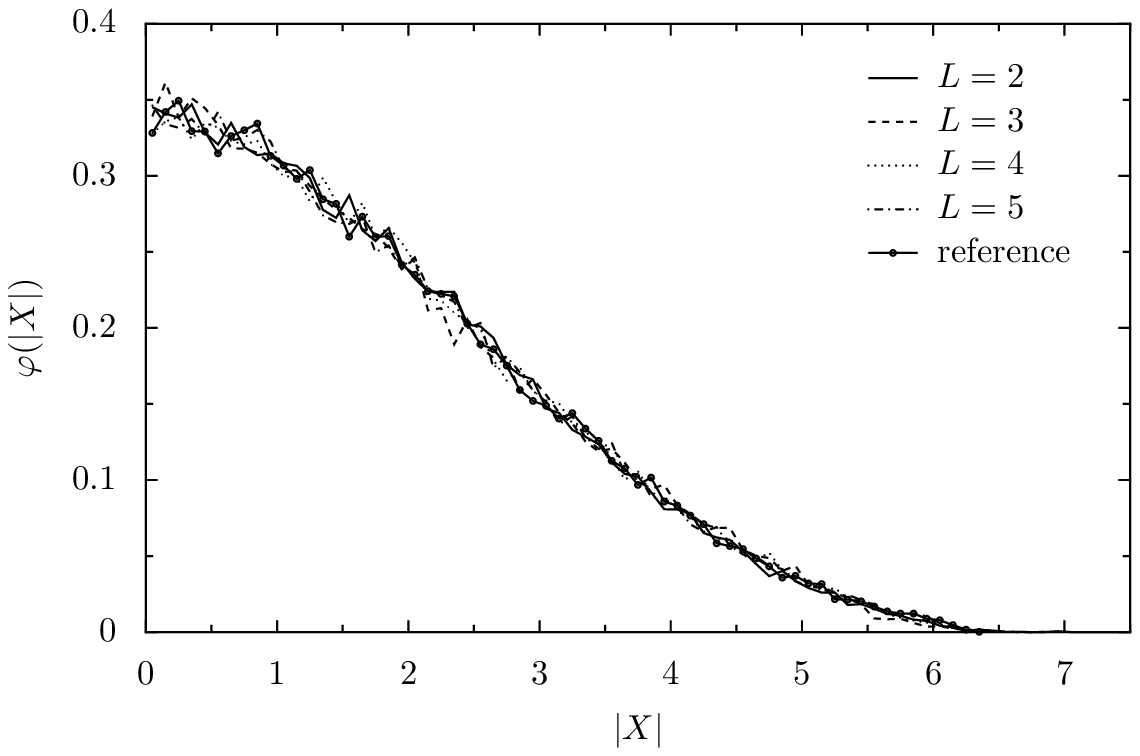}
	\includegraphics[scale=0.6]{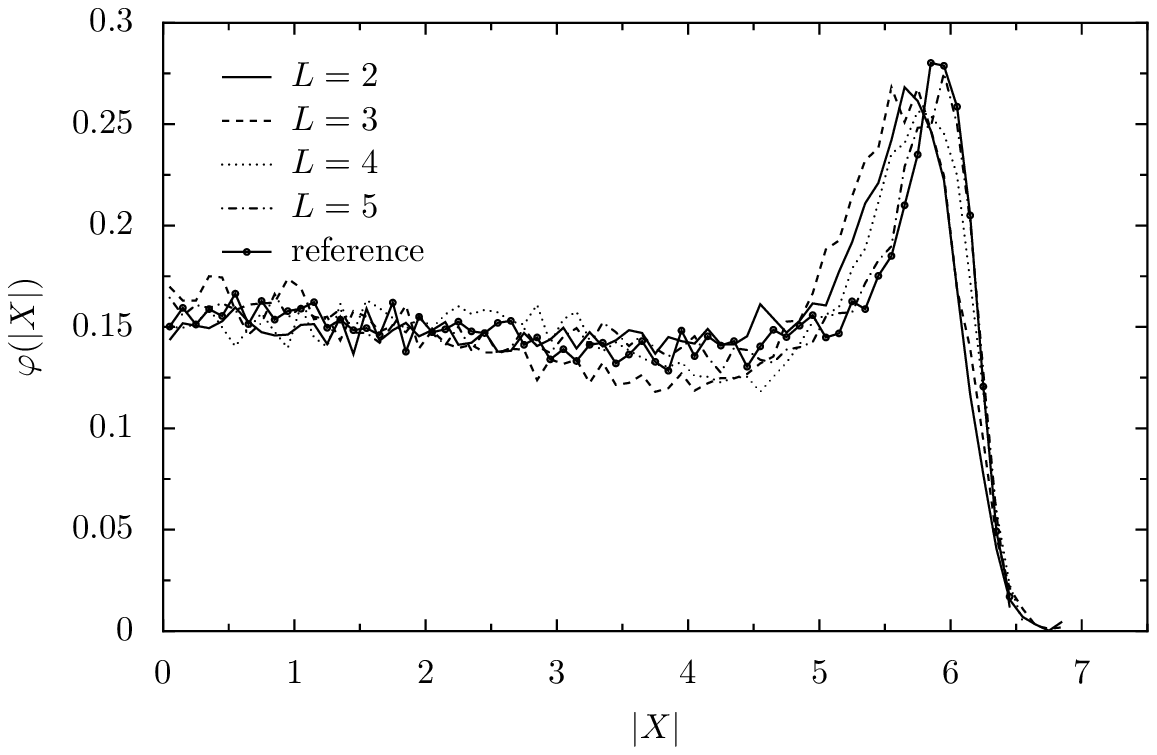}
	\includegraphics[scale=0.6]{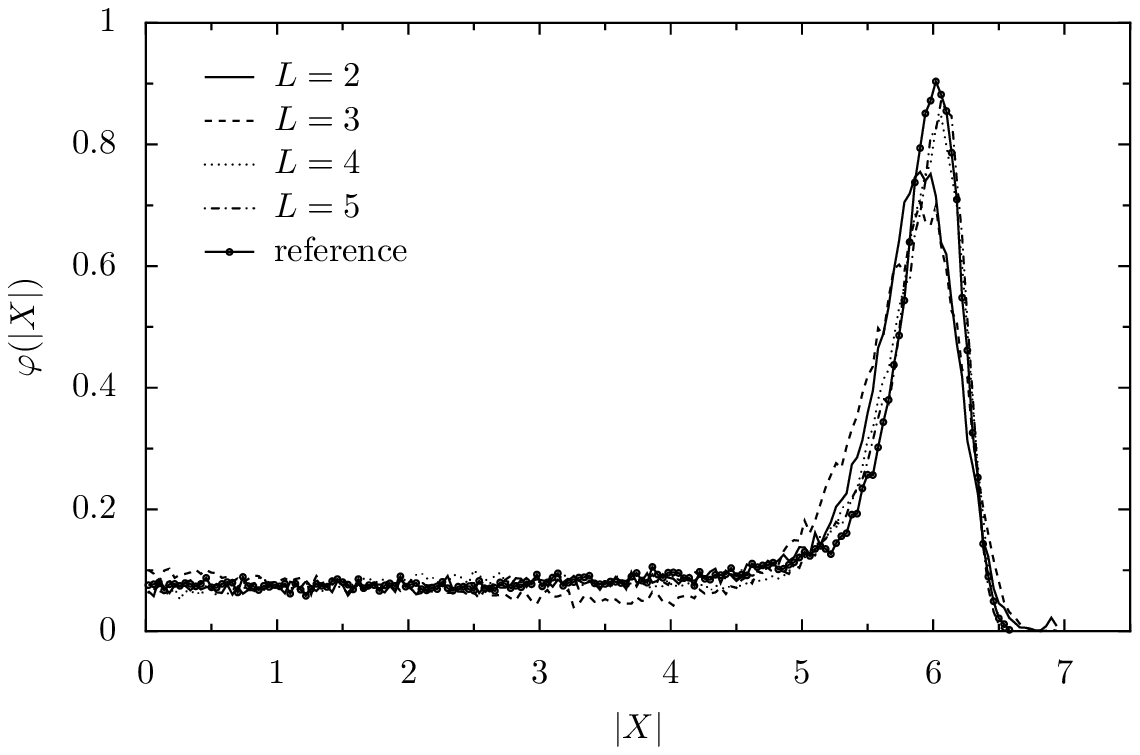}
	\includegraphics[scale=0.6]{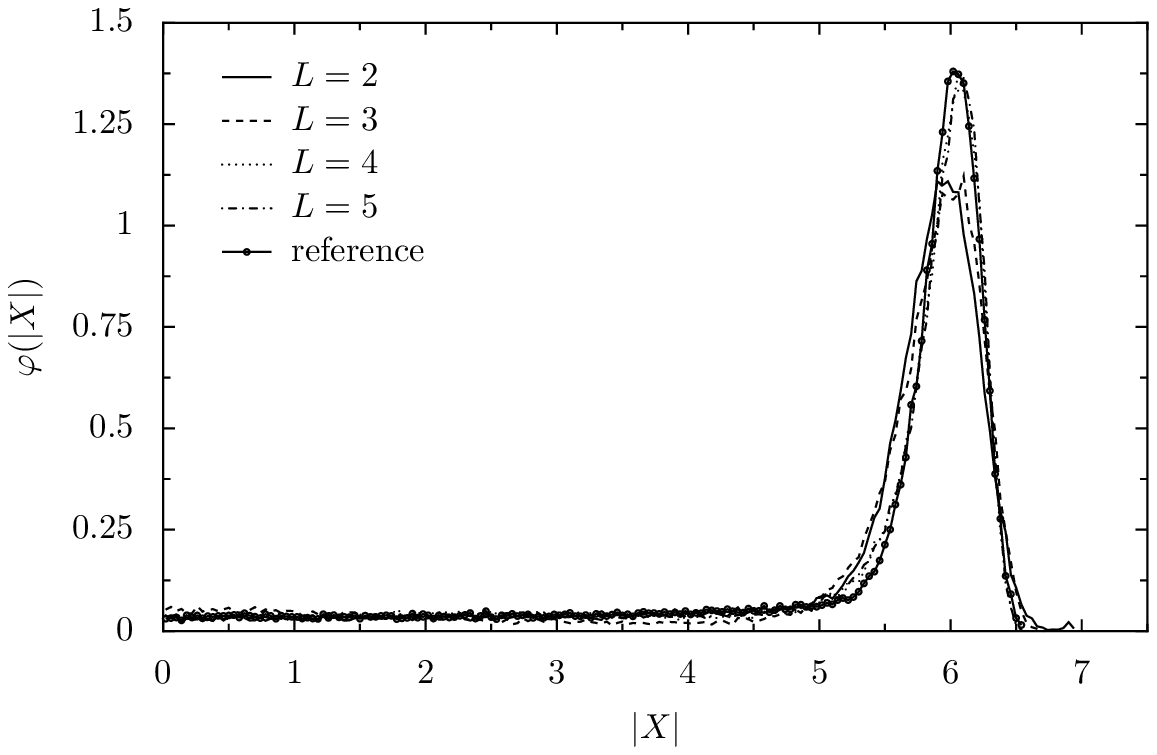}
\end{center}
\caption{\label{fig:fene_nblin_tau} Lifted polymer distributions as a function of the number of macroscopic state variables.  We plot a reference distribution, {\em i.e.}, the polymer distribution after a microscopic simulation up to time $t^*$, as well as the equilibrium polymer distributions after constrained simulation with $L=2,\ldots,5$ moments using strategy 2. Shown are the results for $t^*=0.5$ (top left), $t^*=1$ (top right), $t^*=1.5$ (bottom left) and $t^*=2$ (bottom right). Simulation parameters are given in the text.}
\end{figure}
Compared to the case when only even moments were used, we see that adding $\tau_p$ as a macroscopic variable dramatically improves the obtained equilibrium distributions, and less moments may suffice to characterize the distributions. However, when $L=2$ and $L=3$, we see a peculiar artifact in the distributions, in the sense that we obtain an increase of the number of polymers with near-maximal length (a small second peak in the distributions on the right).  This results in high probability of rejections throughout the constrained simulation.

\subsubsection{Coarse time-stepping}

We now look at the evolution of the numerical closure with respect to the full microscopic simulation, again using $\kappa(t)=2$.  We simulate an ensemble of $N=2000$ FENE dumbbells, starting from the equilibrium distribution in the absence of flow, up to time $t=4$ and compare this reference simulation with a number of simulations using the coarse time-stepper with a different number $p$ macroscopic state variables ($L-1$ even moments, supplemented with the stress tensor $\tau_p$). As before, we choose the macroscopic time-step equal to one microscopic step $\delta t$, {\em i.e.}, $K=1$; all other parameters are also chosen as above.  We lift by freezing physical time and performing a constrained simulation that is consistent with $\M$ until equilibrium of the distribution is reached (here using $m_\infty=50$ constrained time-steps of size $\delta t$); all simulations were verified to have converged with respect to the number of constrained time-steps.  The results are shown in Figure~\ref{fig:nblin_cts_startup_notau}.
\begin{figure}
\begin{center}	\includegraphics[scale=0.75]{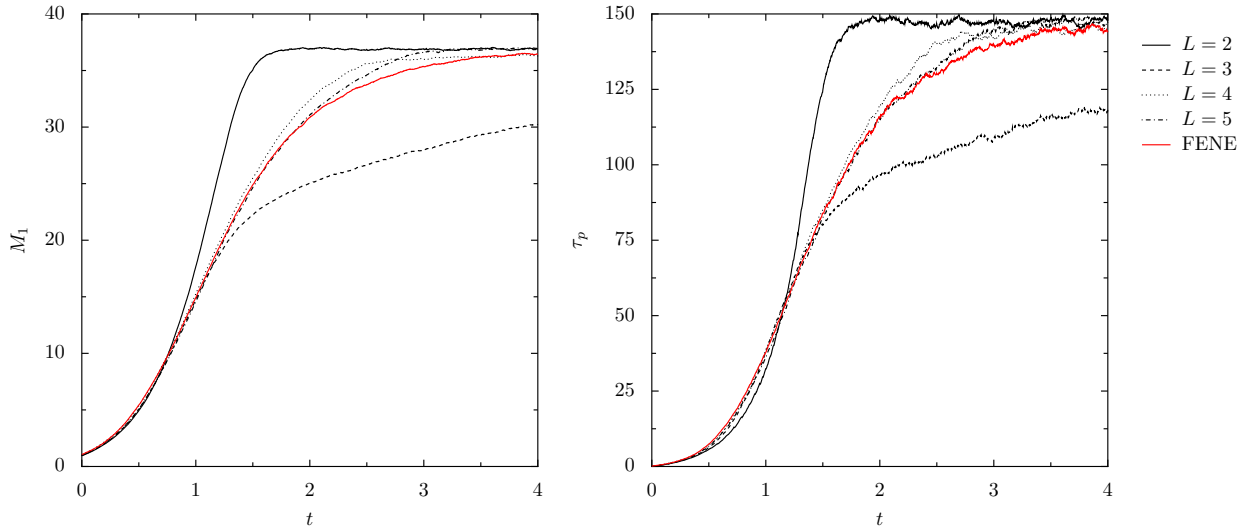}
\end{center}
\caption{\label{fig:nblin_cts_startup_tau}  Evolution of the first even moment $M_1$ (left) and stress $\tau_p$ (right) for an ensemble of FENE dumbbells during startup of elongational flow. Shown are a full microscopic simulation (reference), and simulations using a coarse time-stepper for different numbers of macroscopic state variables  using strategy 2.  Simulation parameters are given in the text.}
\end{figure}
Also here, we see an improvement; the result of the complex flow experiment is shown in figure \ref{fig:fene_cts_tau}.
\begin{figure}
\begin{center}
		\includegraphics[scale=0.75]{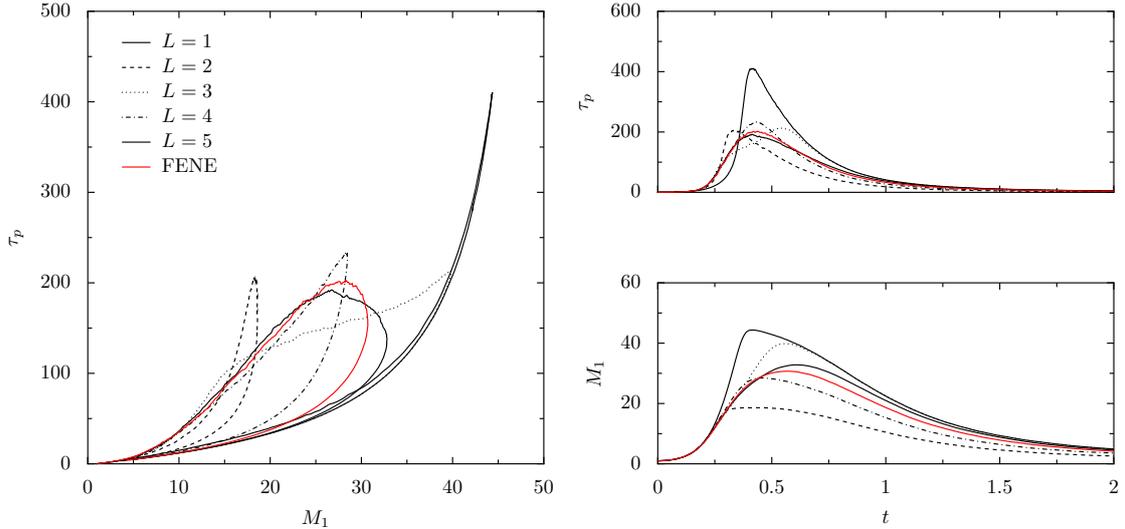}
\end{center}
\caption{\label{fig:fene_cts_tau}  Evolution of first even moment $M_1$ and stress $\tau_p$ for an ensemble of FENE dumbbells during complex flow. Left: $(M_1,\tau_p)$ phase plane view.  Right: temporal evolution. Shown are a full microscopic simulation (reference), and simulations using a coarse time-stepper for different numbers of macroscopic state variables using strategy 2.  Simulation parameters are given in the text.}
\end{figure}

\subsection{Strategy 3: Cascading from the equation of state for $\tau_p$}

Finally, we repeat the above experiments, now considering the moments to be determined by Strategy 3. 

\subsubsection{Lifted configuration distributions}

We again simulate an ensemble of $N=5\cdot 10^4$ FENE dumbbells, subject to a constant velocity gradient $\kappa(t)=2$ over the time interval $t\in [0,t^*]$, with $t^*=0.5,1,1.5,2$ (startup of elongational flow) and obtain $\M^*=\mathcal{R}(\mathcal{X}^*)$ via restriction; the corresponding polymer distribution is taken as the reference distribution. We perform a constrained simulation, starting from $\mathcal{X}^*$, under the constraint that $\mathcal{R}(\mathcal{X})=\M^*$ using the same time-step $\d t$, until the polymer distribution equilibrates. Figure~\ref{fig:stressmu_constrained} shows the constrained equilibrium polymer distributions for an increasing number of macroscopic state variables.   
\begin{figure}
\begin{center}
	\includegraphics[scale=0.6]{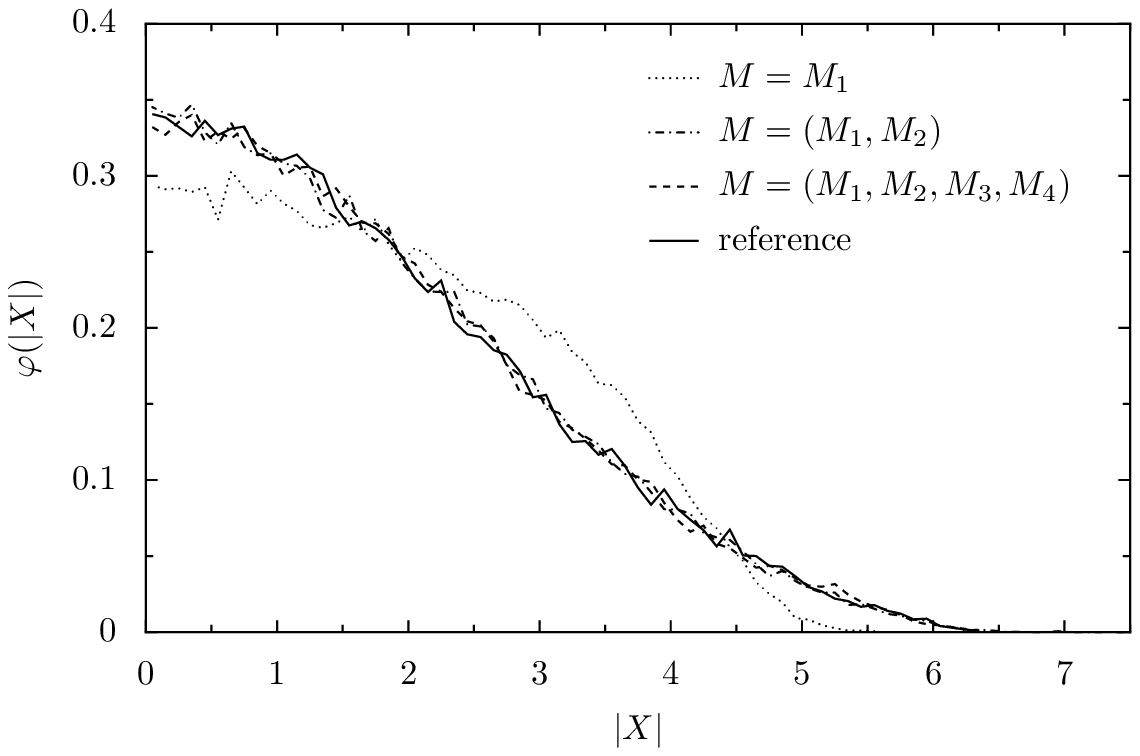}
	\includegraphics[scale=0.6]{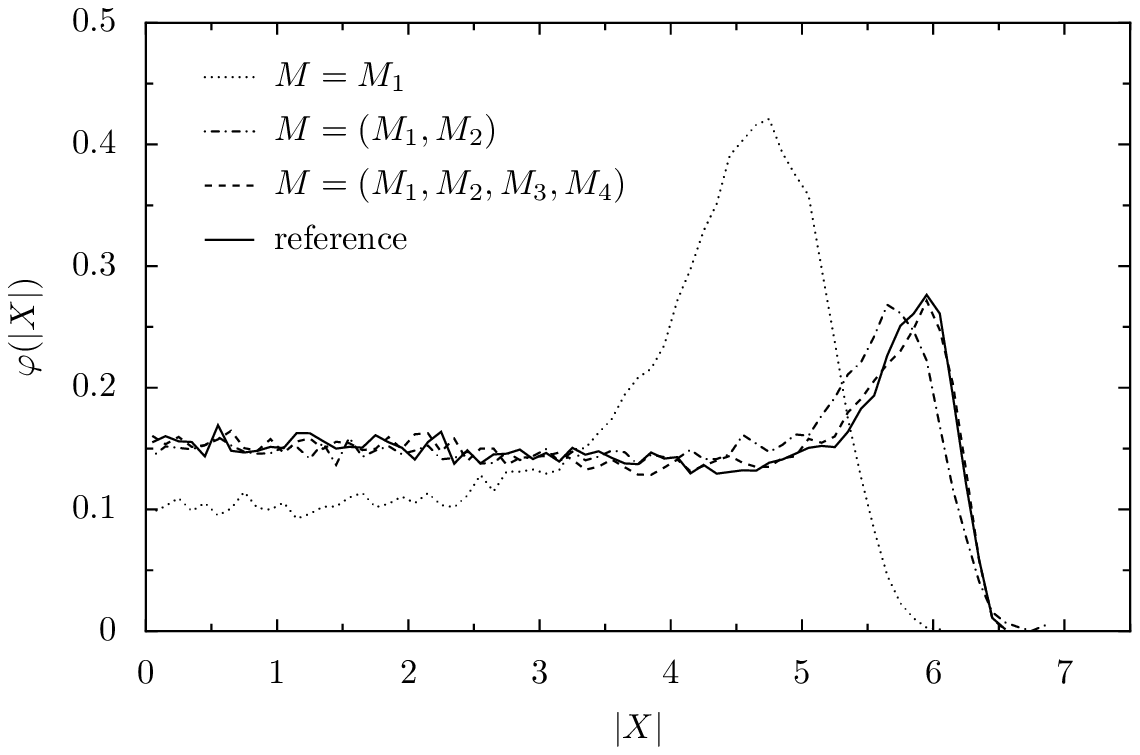}
	\includegraphics[scale=0.6]{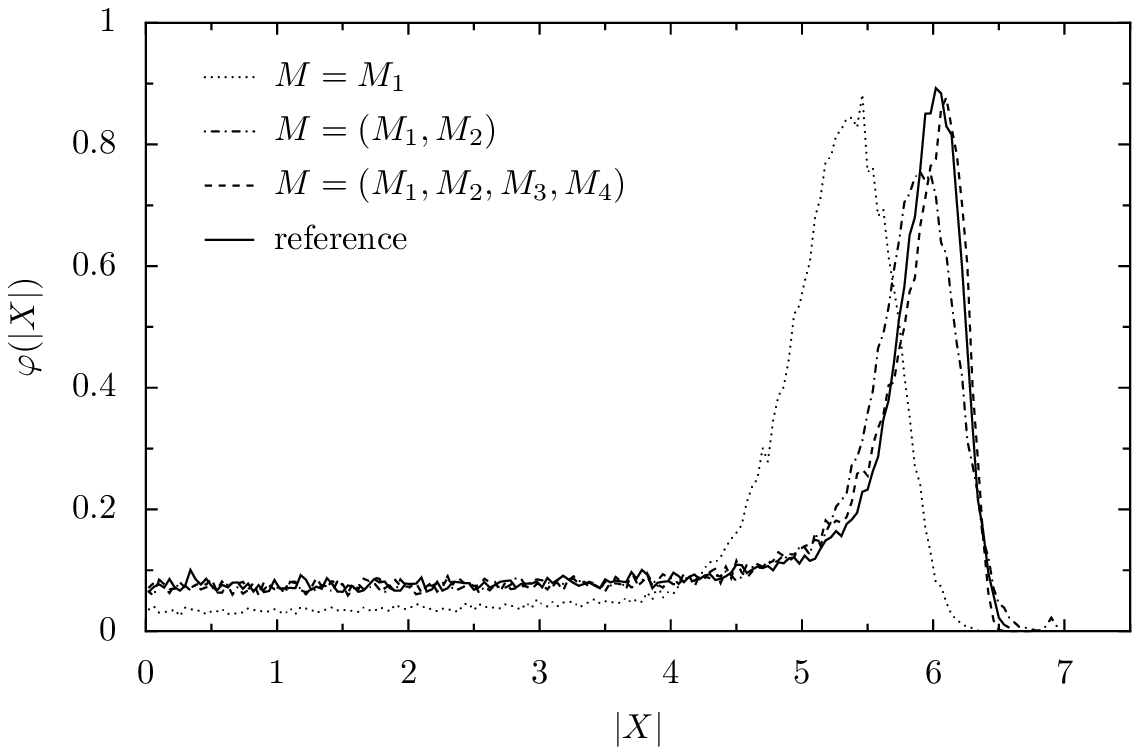}
	\includegraphics[scale=0.6]{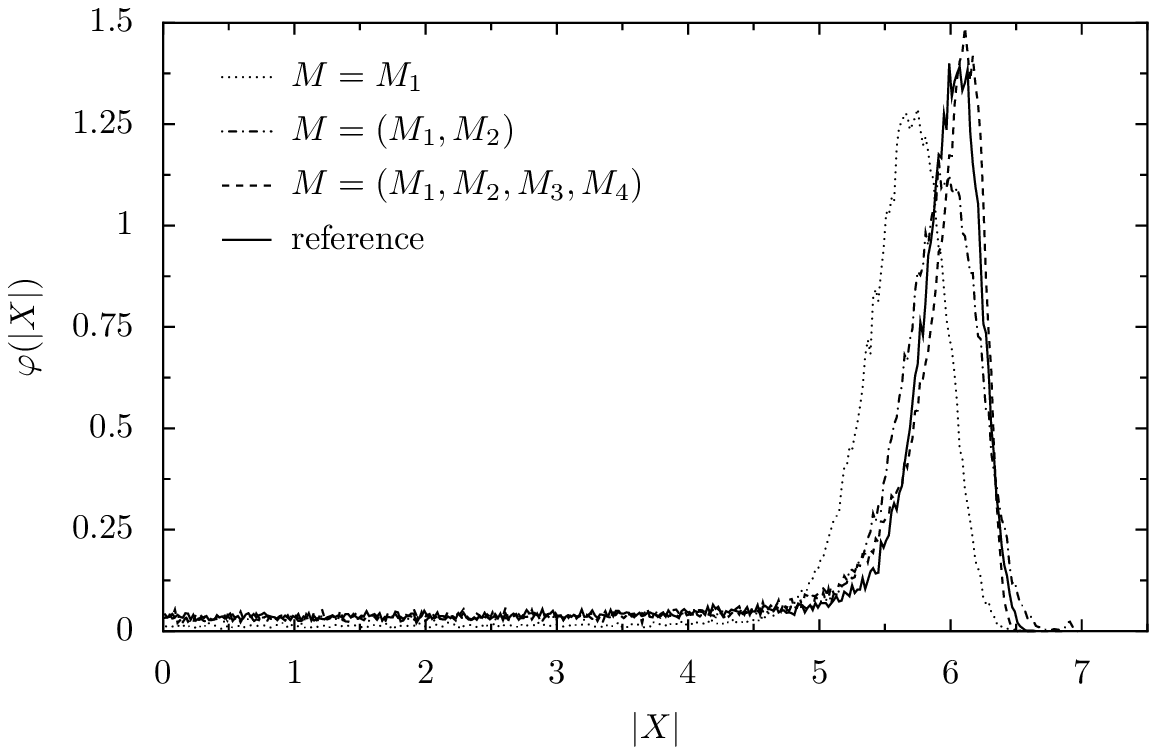}
\end{center}
\caption{\label{fig:stressmu_constrained} Lifted polymer distributions as a function of the number of macroscopic state variables using strategy 3.  We plot a reference polymer distribution, {\em i.e.}, the polymer distribution after a microscopic simulation up to time $t^*$, as well as the equilibrium polymer distributions after constrained simulation using the indicated macroscopic state variables. Shown are the results for $t^*=0.5$ (top left), $t^*=1$ (top right), $t^*=1.5$ (bottom left) and $t^*=2$ (bottom right). Simulation parameters are given in the text.}
\end{figure}
Compared to the previous two strategies, we here observe very good agreement with the reference distribution with less macroscopic state variables.

\subsubsection{Coarse time-stepping}

We now look at the evolution of the numerical closure with respect to the full microscopic simulation, again using $\kappa(t)=2$.  We simulate an ensemble of $N=2000$ FENE dumbbells, starting from the equilibrium distribution in the absence of flow up to time $t=4$ and compare this reference simulation with a number of simulations using the coarse time-stepper with a different number $L$ macroscopic state variables as above. As before, we choose the macroscopic time-step equal to one microscopic step $\delta t$, {\em i.e.}, $K=1$; all other parameters are also chosen as above.  We lift by freezing physical time and performing a constrained simulation that is consistent with $\M$ until equilibrium of the distribution is reached (here using $m_\infty=50$ constrained time-steps of size $\delta t$); all simulations were verified to have converged with respect to the number of constrained time-steps.  The results are shown in Figure~\ref{fig:cts_stressmu_startup}.
\begin{figure}
\begin{center}	\includegraphics[scale=0.75]{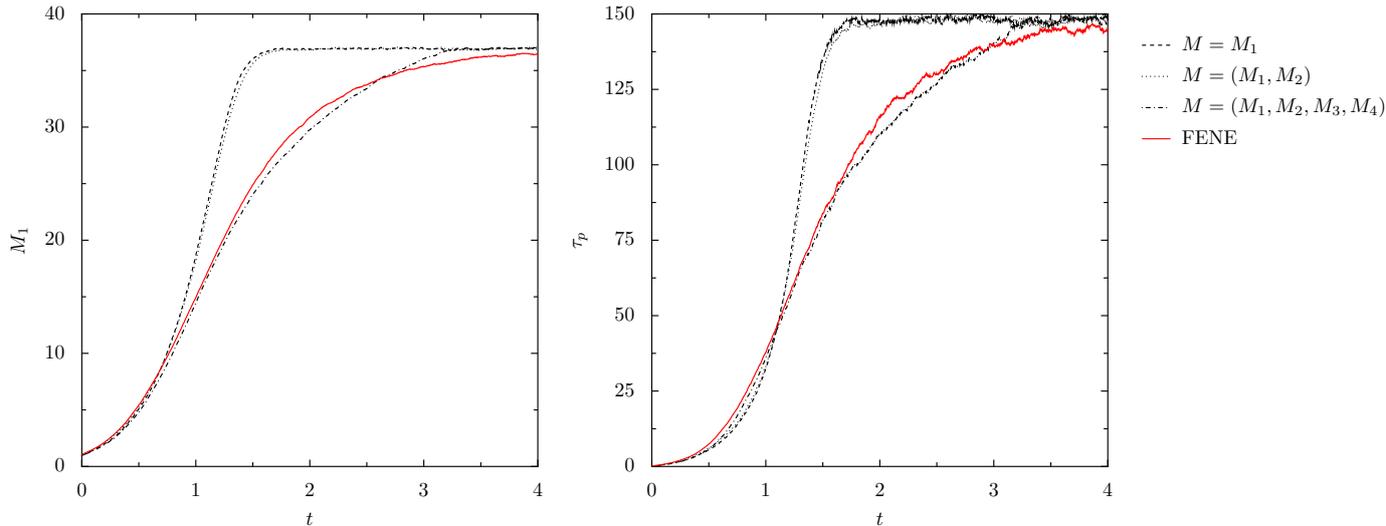}
\end{center}
\caption{\label{fig:cts_stressmu_startup}  Evolution of the first even moment $M_1$ (left) and stress $\tau_p$ (right) for an ensemble of FENE dumbbells during startup of elongational flow. Shown are a full microscopic simulation (reference), and simulations using a coarse time-stepper for different numbers of macroscopic state variables using strategy 3.  Simulation parameters are given in the text.}
\end{figure}
Also here, we see the improvement; the result of the complex flow experiment is shown in Figure~\ref{fig:fene_stressmu_cts}.
\begin{figure}
\begin{center}
		\includegraphics[scale=0.75]{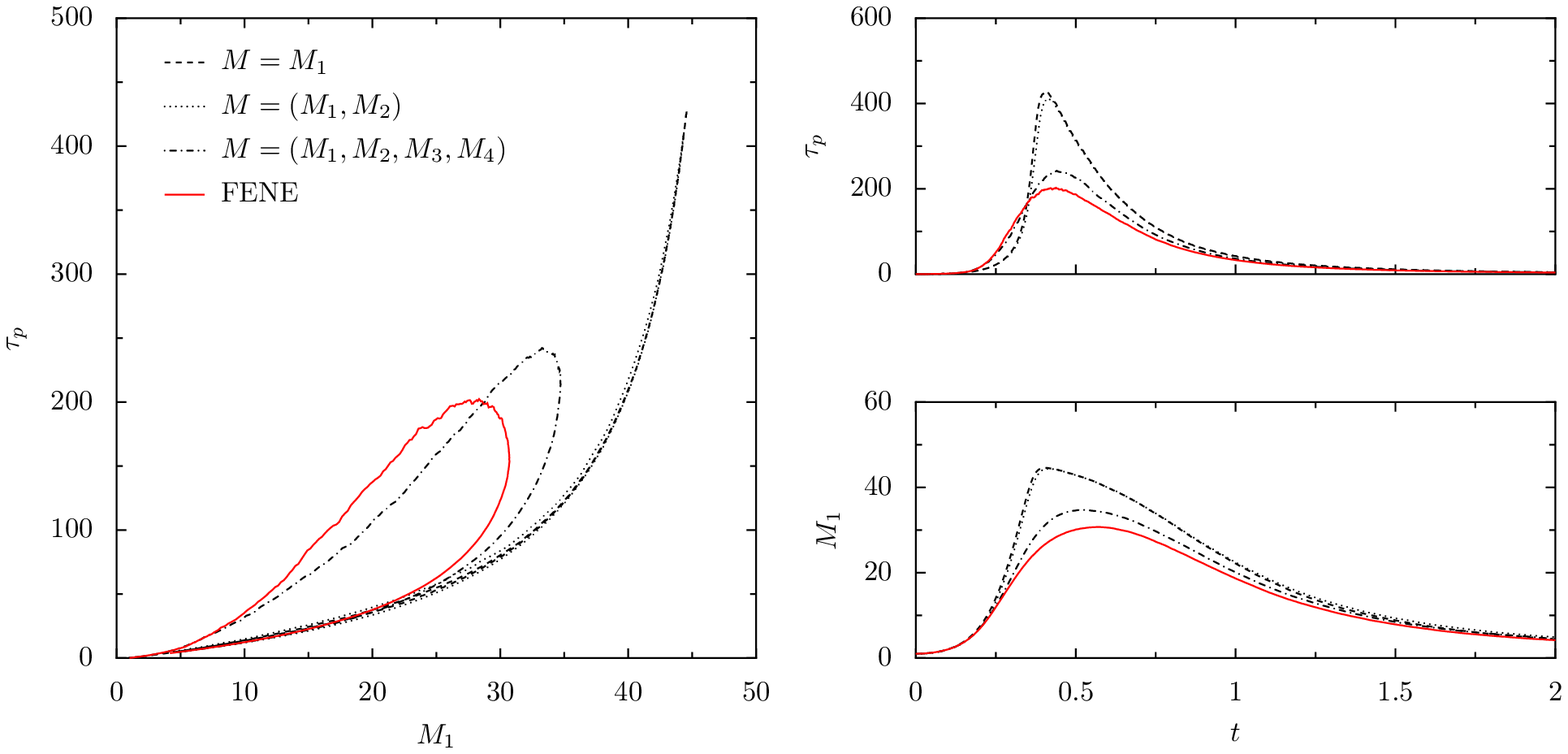}
\end{center}
\caption{\label{fig:fene_stressmu_cts}  Evolution of first even moment $M_1$ and stress $\tau_p$ for an ensemble of FENE dumbbells during complex flow. Left: $(M_1,\tau_p)$ phase plane view.  Right: temporal evolution. Shown are a full microscopic simulation (reference), and simulations using a coarse time-stepper for different numbers of macroscopic state variables using strategy 3.  Simulation parameters are given in the text.}
\end{figure}

\section{Conclusions and discussion\label{sec:concl}}

We proposed a numerical closure strategy that enables to easily explore which sets of macroscopic state variables should be chosen to get good closure approximations for the kinetic simulation of polymeric fluids. The method involves the reconstruction of a polymer distribution as the constrained equilibrium of a microscopic Monte Carlo simulation, constrained upon the desired macroscopic state. The resulting algorithm is very flexible, and enables to explore the error introduced by the closure for various sets of macroscopic state variables $\M$. 
We showed that this numerical closure approximation is optimal, in the sense that, when applied to a microscopic model which has an equivalent macroscopic model, it indeed yields the macroscopic model. Moreover, in some specific cases, the approach is shown to be closely related to the closure approximation based on a quasi-equilibrium condition. 
While the exposition in the present paper was restricted to the one-dimensional case, extensions to higher space dimensions are straightforward.

The procedure straightforwardly enables to test hypotheses on which macroscopic state variables should be included to build good closures. We have examined three strategies to define a hierarchy of macroscopic state variables. Our numerical experiments indicate that, at least for the cases considered in this paper, fewer macroscopic state variables are required to obtain accurate results when choosing a strategy that adds macroscopic state variables based on the unknowns that appear on the right-hand side of an It\^o calculation for the already included state variables (Strategy 3 in this text). Moreover, the experiments in section \ref{sec:relax} indicate that, in principle, the accuracy of the numerical closure can be estimated by monitoring non-constrained state variables during the constrained simulation.  Finally, when one can accurately assess the (lack of) accuracy of a given set of macroscopic state variables, it is straightforward to adjust the number of macroscopic state variables throughout a simulation using a corresponding accuracy criterion, as is done in \cite{TonyAdaptive,IlgAdapt}. Note that, once a good set of macroscopic state variables is obtained, one could also consider proceeding along the lines of \cite{Ilg:2002p10825} to obtain a quasi-equilibrium closure. 

So far, we have not discussed potential	gains in computational efficiency compared to a full microscopic simulation.  One way to achieve a reduction in computational cost is to make use of coarse projective integration \cite{KevrGearHymKevrRunTheo03,Kevrekidis:2009p7484} or similar methods \cite{EEng03,E:2007p3747}. In this type of methods, one uses the proposed numerical closure technique to estimate the \emph{time derivative} of the unavailable macroscopic model, and uses this estimated time derivative to perform a large (projective) forward Euler step for the macroscopic state variables; one then repeats the numerical closure procedure. The efficiency of coarse projective integration is strongly tied to a separation in time-scales between relaxation and macroscopic evolution; unfortunately, the physically interesting non-Newtonian behaviour precisely appears when this time-scale separation is absent.  We refer to  \cite{LiVdE} for a study of the acceleration that can be obtained in the small Deborah number limit, in which the polymeric fluid becomes Newtonian.

\section*{Acknowledgments}
GS is a Postdoctoral Fellow of the Research Foundation – Flanders (FWO – Vlaanderen). This work was performed during a research stay of GS at the Department of Mechanical Engineering, iMMC, U.C. Louvain, whose hospitality is gratefully acknowledged. TL would like to thank S. Olla for fruitful discussions. GS would like to thank R. Keunings for stimulating discussions. The work of GS was supported by the Research Foundation – Flanders through Research Project G.0130.03 and by the Interuniversity Attraction Poles Programme of the Belgian Science Policy Office through grant IUAP/V/22. The scientific responsibility rests with its authors.

\bibliographystyle{plain}
\bibliography{polymers.bib}
\end{document}